\newtheorem{thm}{Theorem}
\newtheorem{prop}{Proposition}
\newtheorem{asmp}{Assumption}
\newtheorem{defn}{Definition}
\newtheorem{lem}{Lemma}
\newtheorem{cor}{Corollary}
\DeclareMathOperator*{\esssup}{ess\,sup}
\begin{document}

\title{On the Use of Penalty MCMC for Differential Privacy}
\author{Sinan Y{\i}ld{\i}r{\i}m \\
Faculty of Engineering and Natural Sciences \\
Sabanc{\i} University, \.{I}stanbul, Turkey}
\date{\today}
\maketitle

\begin{abstract}
We view the penalty algorithm of \citet{Ceperley_and_Dewing_1999}, a Markov chain Monte Carlo (MCMC) algorithm for Bayesian inference, in the context of data privacy. Specifically, we study differential privacy of the penalty algorithm and advocate its use for data privacy. We show that in the simple model of independent observations the algorithm has desirable convergence and privacy properties that scale with data size. Two special cases are also investigated and privacy preserving schemes are proposed for those cases: (i) Data are distributed among several data owners who are interested in the inference of a common parameter while preserving their data privacy. (ii) The data likelihood belongs to an exponential family.
\end{abstract}

\section{Introduction} \label{sec: Introduction}

One concern about sharing datasets for use of third parties is violation of privacy of individuals who contribute to those datasets with their private information. Significant amount of research is devoted to address that concern by either analysing privacy preserving properties of already existing methods or developing novel methods that are useful for statistical inference and still preserve data privacy to a quantifiable extent at the same time.

One popular formulation of data privacy is \emph{differential privacy} \citep{Dwork_et_al_2006}. Differential privacy associates a random algorithm with a mapping from the space of data to a space of probability distributions for the output of the algorithm. It then quantifies the privacy of the algorithm roughly by some `difference' between two probability distributions corresponding to two `neighbour' data sets that differ by only one entry. This `difference' is expressed by two nonnegative parameters which are typically shown by $\epsilon$ and $\delta$, and the algorithm is said to be $(\epsilon, \delta)$-differentially private. Specifically, let $\mathcal{A}$ be a random algorithm that takes input from a set $\mathcal{X} \subseteq \mathbb{R}^{n}$ for some $n \geq 1$ and produces random outputs in $\mathcal{S}$. Let $h(X, Y)$ be the edit distance or \emph{Hamming distance} between data sets $X, Y \in \mathcal{X}$ i.e.\ number of entry-wise differences between $X$ and $Y$.
\begin{defn}
\emph{\textbf{Differential privacy:}} We call a randomised algorithm $\mathcal{A}$ with domain $\mathcal{X}$ and range $\mathcal{S}$ $(\epsilon, \delta)$-differentially private  if for any measurable set $S \subset \mathcal{S}$ and for all $X, Y \in \mathcal{X}$ such that $h(X, Y) \leq 1$, we have
\[
P(\mathcal{A}(X) \in S) \leq \exp(\epsilon) P(\mathcal{A}(Y) \in S) + \delta.
\]
where $\mathcal{A}(X)$ and $\mathcal{A}(Y)$ denotes random outputs of the algorithm using inputs $X$ and $Y$, respectively. If $\delta = 0$, then $\mathcal{A}$ is called $\epsilon$-differential private.
\end{defn}
Obviously, the smaller $\epsilon$ and $\delta$ are the better privacy is preserved, since smaller $\epsilon$ and $\delta$ correspond to a closer match between the probability distributions of $\mathcal{A}(X)$ and $\mathcal{A}(Y)$.

\subsection{Relevance of Monte Carlo methods} \label{sec: Relevance of Monte Carlo methods}

In Bayesian inference, the answer to an unknown parameter of interest is in the form of a probability distribution, namely the posterior distribution, denoted by $\pi$ throughout the paper. Monte Carlo methods, which constitute a great proportion of Bayesian inference methods, are used to generate random samples either exactly or approximately from $\pi$ when $\pi$ is intractable, i.e.\ it does not have tractable expectations of certain functions of the parameter of interest. Therefore, by its nature Monte Carlo methods produce random outputs and that makes them natural candidates for privacy preserving methods.

Arguably the most widely used Monte Carlo method for Bayesian inference is Markov chain Monte Carlo (MCMC) (see \citet{Robert_and_Casella_2004} for example), where one generates a sequence of samples from a Markov chain that has $\pi$ as its stationary distribution. Study of MCMC methods in relation to differential privacy has recently attracted researchers. For example, \citet{Wang_et_al_2015} presented a modified version of the stochastic gradient Langevin dynamics algorithm of \citet{Welling_and_Teh_2011} for differential privacy. Even more recently, \citet{Foulds_et_al_2016} discussed differential privacy of several MCMC methods. 

\subsection{Contribution} \label{sec: Contribution}
In this paper, we contribute to the literature on privacy preserving methods by studying differential privacy of a specific MCMC algorithm, namely the \emph{penalty algorithm} \citep{Ceperley_and_Dewing_1999}. The penalty algorithm is actually an approximation of the Metropolis-Hastings (MH) algorithm \citep{Hastings_1970}, one of the most popular MCMC methods. The approximation is due to the use of the log-acceptance ratio of the MH algorithm in normally distributed noise with known variance. 

A true implementation of the penalty algorithm requires being able to compute the log-acceptance ratio in the first place. That is why the penalty algorithm has so far been used with approximately normally distributed estimates of the log-acceptance ratio \emph{only} when this ratio is intractable. Indeed, without privacy concerns, it is always favourable to use the log-acceptance ratio rather than its noisy version. 

However, in a data privacy setting, the penalty algorithm becomes more relevant and potentially more useful than the MH algorithm, an observation that has surprisingly escaped attention so far. We advocate the use of the penalty algorithm for a few reasons: 
\begin{itemize}
\item The penalty algorithm is an \emph{exact} algorithm in that its stationary distribution remains $\pi$.
\item The penalty algorithm deliberately uses noisy functions of data, hence it has desirable privacy preserving properties. 
\item Under certain conditions, the differential privacy as well as convergence properties of the penalty algorithm scale well with data size.
\end{itemize}
The first reason is already known \citep{Ceperley_and_Dewing_1999, Nicholls_et_al_2012}; we will justify the last two in Sections \ref{sec: The penalty algorithm} and \ref{sec: Differential privacy of the penalty algorithm}. Specifically, we first show that the penalty algorithm, although inferior to the MH algorithm that it mimics, can preserve the uniform or geometric ergodicity properties (with slower rates) of the MH algorithm under some conditions. Then we proceed with the analysis of the differential privacy of the algorithm and show that the algorithm does have desired privacy properties that scale favourably with data size. We also investigate the use of the algorithm in a \emph{data sharing} scenario when data are distributed among different owners who want to infer the same $\theta$ conditional on all the available data .

Our analysis is similar to that in \citet{Wang_et_al_2015} in that our main theorem (Theorem \ref{thm: Differential privacy of penalty algorithm}) employs the results on the so called \emph{advanced composition} \citep{Dwork_et_al_2010} and  \emph{Gaussian mechanism} \citep{Dwork_and_Roth_2013}. One difference is that the algorithm discussed in \citet{Wang_et_al_2015} is only asymptotically exact but computationally scalable in data size due to subsampling; whereas our algorithm preserves the target density at all iterations but computationally not scalable in data size (unless the posterior distribution has a convenient form such as the exponential family, which we will discuss further in Section \ref{sec: Exponential families}). 

In a recent work by \citet{Foulds_et_al_2016}, differential privacy of certain MCMC methods for exponential family distributions are studied when they are used with the sufficient statistics of the data modified by Laplacian noise. The authors show that the proposed methods are asymptotically unbiased in terms of their target distribution. However, for any finite data size, the algorithms have bias. The penalty method, on the other hand, remains unbiased and it is applicable to more general distributions than exponential families.

\subsection{Outline} \label{sec: Outline}
The rest of the paper is organised as follows: In Section \ref{sec: The penalty algorithm}, we introduce the penalty algorithm for a general target distribution and we provide some statistical properties of the algorithm mainly in terms of convergence. The purpose of Section \ref{sec: The penalty algorithm} is mainly to assist the analysis in Section \ref{sec: Differential privacy of the penalty algorithm} where we work out the differential privacy of the algorithm, specifically we will explain what happens to $\epsilon$ and $\delta$ of the algorithm as the data size increases. We conclude in Section \ref{sec: Conclusion} with final discussions and a mention to future work.

\section{The penalty algorithm} \label{sec: The penalty algorithm}
Suppose we are given a target distribution $\pi$ for variable $\theta \in \Theta \subseteq \mathbb{R}^{d_{\theta}}$ for some $d_{\theta} \geq 1$ with probability density shown by $\pi(\theta)$ and a proposal kernel $q$ with density shown as $q(\cdot | \theta)$ for $\theta \in \Theta$. Under some mild conditions on $q$ with respect to $\pi$, the Metropolis-Hastings (MH) algorithm \citep{Hastings_1970} generates a reversible Markov chain $\{ \theta_{t} \}_{t \geq 0}$ with invariant density $\pi$ as follows: Start with $\theta_{0}$; for $t \geq 0$, given $\theta_{t} = \theta$, propose $\theta' \sim q(\cdot | \theta)$ and set $\theta_{t + 1} = \theta'$ (or `accept' $\theta'$) with probability
\begin{align*}
\alpha(\theta, \theta') = \min \left\{ 1, r(\theta, \theta') \right\},
\end{align*}
else set $\theta_{t + 1} = \theta'$ (or `reject'), where
\begin{align*}
r(\theta, \theta') = \frac{\pi(\theta') q(\theta | \theta')}{\pi(\theta) q(\theta' | \theta)},
\end{align*}
is called the acceptance ratio given the current and proposed values $\theta$ and $\theta'$.

The penalty algorithm of \citet{Ceperley_and_Dewing_1999} replaces the logarithm of the acceptance ratio $\lambda(\theta, \theta') := \log r(\theta, \theta')$ with an unbiased normal estimate $\hat{\lambda}_{\sigma}(\theta, \theta')$ with variance $\sigma^{2}(\theta, \theta') = \sigma^{2}(\theta', \theta)$ that takes its values from a symmetric function $\sigma^{2}: \Theta^{2} \rightarrow \left[ 0, \infty \right)$, i.e.\
\begin{equation*}
\hat{\lambda}_{\sigma}(\theta, \theta') \sim \mathcal{N} (\lambda(\theta, \theta'), \sigma^{2}(\theta, \theta'))
\end{equation*}
and accepts $\theta'$ with the modified acceptance probability
\[
\hat{\alpha}_{\sigma}(\theta, \theta') = \min \left\{ 1, \hat{r}_{\sigma}(\theta, \theta') \right\}
\]
where the modified acceptance ratio
\[
\hat{r}_{\sigma}(\theta, \theta') = \exp \left( \hat{\lambda}_{\sigma}(\theta, \theta') - \sigma^{2}(\theta, \theta')/2 \right)
\]
has the \emph{penalty term} $\sigma^{2}(\theta, \theta')/2$ as a compromise for using an estimate $\hat{\lambda}_{\sigma}(\theta, \theta')$ instead of $\lambda(\theta, \theta')$.

\subsection{Statistical properties} \label{sec: Statistical properties}

The penalty algorithm has the remarkable property that the Markov chain it generates still has $\pi$ as its invariant density, thus the algorithm still targets $\pi$; see \citet{Ceperley_and_Dewing_1999} for the original proof and \citet{Nicholls_et_al_2012} for an alternative proof that shows that the penalty algorithm is reversible with respect to $\pi$. 

\citet{Nicholls_et_al_2012} also showed (in a more general setting) that using $\hat{\alpha}_{\sigma}(\theta, \theta')$ is always inferior to using $\alpha(\theta, \theta')$ in the Peskun sense, i.e.\ for any $\theta, \theta'$ the expected acceptance probability 
\begin{equation*}
\alpha_{\sigma}(\theta, \theta') := \mathbb{E} \left[ \hat{\alpha}_{\sigma}(\theta, \theta') \right] \leq \alpha(\theta, \theta'),
\end{equation*} 
which can be verified via Jensen's inequality. In addition, we show here that in fact the algorithm's performance in the Peskun sense (i.e.\ $\alpha_{\sigma}(\theta, \theta')$) decreases monotonically in $\sigma^{2}(\theta, \theta')$. (A simple proof by working out the derivative of $\alpha_{\sigma}(\theta, \theta')$ is given in the Appendix.)
\begin{prop} \label{prop: performance w.r.t. penalty term}
For any $\theta, \theta' \in \Theta$, $\alpha_{\sigma}(\theta, \theta')$ is a decreasing function of $\sigma^{2}(\theta, \theta') \in [0, \infty)$.
\end{prop}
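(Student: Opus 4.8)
The plan is to write $\alpha_\sigma(\theta,\theta')$ as an explicit function of $\sigma^2$ and differentiate. Fix $\theta,\theta'$ and abbreviate $\lambda = \lambda(\theta,\theta')$, $s = \sigma^2(\theta,\theta')$, and let $Z\sim\mathcal N(0,1)$ so that $\hat\lambda_\sigma = \lambda + \sqrt{s}\,Z$. Then
\[
\alpha_\sigma(\theta,\theta') = \mathbb E\left[ \min\left\{ 1, \exp\left( \lambda + \sqrt{s}\,Z - s/2 \right) \right\} \right] = \mathbb E\left[ \min\left\{ 1, e^{W} \right\} \right],
\]
where $W = \lambda - s/2 + \sqrt{s}\,Z \sim \mathcal N(\lambda - s/2, s)$. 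So the first step is to reduce the claim to showing that $g(s) := \mathbb E[\min\{1, e^W\}]$ is nonincreasing in $s \ge 0$, where $W$ has mean $\lambda - s/2$ and variance $s$.

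**Evaluating the expectation in closed form.**
Next I would compute $g(s)$ explicitly using the standard Gaussian integral identity for $\mathbb E[\min\{1,e^W\}]$ when $W\sim\mathcal N(\mu,s)$. Splitting on the sign of $W$ and using that $\mathbb E[e^W \mathbf 1_{W\le 0}] = e^{\mu + s/2}\,\Phi\!\left(\frac{-\mu - s}{\sqrt s}\right)$ together with $\mathbb E[\mathbf 1_{W>0}] = \Phi\!\left(\frac{\mu}{\sqrt s}\right)$, and then substituting $\mu = \lambda - s/2$, the $e^{\mu+s/2}$ factor collapses to $e^{\lambda}$ and one obtains the clean expression
\[
\alpha_\sigma(\theta,\theta') = \Phi\!\left( \frac{\lambda - s/2}{\sqrt s} \right) + e^{\lambda}\,\Phi\!\left( \frac{-\lambda - s/2}{\sqrt s} \right).
\]
This matches the known formula for the expected acceptance rate of the penalty algorithm, and I would record it as an intermediate display (possibly citing it from \citet{Ceperley_and_Dewing_1999} or \citet{Nicholls_et_al_2012} rather than deriving it in full).

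**Differentiating and signing the derivative.**
Now write $u = u(s) = \frac{\lambda - s/2}{\sqrt s} = \frac{\lambda}{\sqrt s} - \frac{\sqrt s}{2}$ and $v = v(s) = \frac{-\lambda - s/2}{\sqrt s} = -\frac{\lambda}{\sqrt s} - \frac{\sqrt s}{2}$, so that $\alpha_\sigma = \Phi(u) + e^\lambda \Phi(v)$. Differentiating in $s$ gives $\frac{d\alpha_\sigma}{ds} = \varphi(u)\,u'(s) + e^\lambda\,\varphi(v)\,v'(s)$, where $\varphi$ is the standard normal density. The key algebraic observation is that $\varphi(u)$ and $e^\lambda\varphi(v)$ are in fact \emph{equal}: one checks $u^2 = \frac{\lambda^2}{s} - \lambda + \frac{s}{4}$ and $v^2 = \frac{\lambda^2}{s} + \lambda + \frac{s}{4}$, so $v^2 - u^2 = 2\lambda$, hence $e^\lambda \varphi(v) = e^\lambda \frac{1}{\sqrt{2\pi}} e^{-v^2/2} = \frac{1}{\sqrt{2\pi}} e^{-(v^2 - 2\lambda)/2} = \frac{1}{\sqrt{2\pi}} e^{-u^2/2} = \varphi(u)$. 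Therefore $\frac{d\alpha_\sigma}{ds} = \varphi(u)\,(u'(s) + v'(s))$, and since $u'(s) + v'(s) = \frac{d}{ds}\!\left(-\sqrt s\right) = -\frac{1}{2\sqrt s} < 0$ while $\varphi(u) > 0$, the derivative is strictly negative for all $s > 0$. Continuity at $s = 0$ (where $\alpha_\sigma \to \alpha(\theta,\theta') = \min\{1, e^\lambda\}$) then gives monotone decrease on all of $[0,\infty)$.

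**Main obstacle.**
The only real subtlety is the degenerate case $s = 0$: the formulas for $u,v$ involve $\sqrt s$ in the denominator, so the closed form must be interpreted via limits there, and differentiability at $0$ should be handled separately (or one simply states monotonicity on $(0,\infty)$ and extends by continuity, since $g$ is manifestly continuous at $0$). I would also double-check the edge behaviour as $\lambda\to\pm\infty$ to confirm the formula degenerates correctly, but the cancellation $\varphi(u) = e^\lambda\varphi(v)$ is the crux and it is a short computation; everything else is routine.
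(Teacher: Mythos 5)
Your proof is correct and follows essentially the same route as the paper's: both derive the closed form $\alpha_\sigma = \Phi\bigl(\lambda/\sigma - \sigma/2\bigr) + e^{\lambda}\,\Phi\bigl(-\lambda/\sigma - \sigma/2\bigr)$ by splitting the expectation of the min, then differentiate and use the cancellation $e^{\lambda}\varphi\bigl(-\lambda/\sigma-\sigma/2\bigr) = \varphi\bigl(\lambda/\sigma-\sigma/2\bigr)$ to sign the derivative. The only cosmetic differences are that you differentiate in $s=\sigma^2$ rather than $\sigma$ (equivalent, since $\sigma\mapsto\sigma^2$ is increasing) and that you explicitly handle the endpoint $s=0$ by continuity, which the paper leaves implicit.
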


Denote the transition kernel of the MH algorithm and the corresponding penalty algorithm with variance $\sigma^{2}(\theta, \theta')$ by $T$ and $T_{\sigma}$, respectively. Despite its (perhaps not surprising) inferiorities that are mentioned above, $T_{\sigma}$ inherits the favourable ergodicity properties of the $T$ under some conditions. The following proposition on uniform ergodicity is proven in a more general setting in \citet{Nicholls_et_al_2012}; we present it here specifically for the penalty algorithm for completeness.
\begin{prop} \label{prop: ergodicity of the penalty algorithm}
Suppose that $\sup_{\theta, \theta'} \sigma^{2}(\theta, \theta') = B < \infty$. If the MH algorithm that targets $\pi$ with proposal $q$ is uniformly ergodic, then the penalty algorithm that uses $\sigma^{2}(\theta, \theta')$ for the variance of its noise and the same $q$ for the proposal is also uniformly ergodic.
\end{prop}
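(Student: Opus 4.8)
The plan is to show that the penalty kernel $T_{\sigma}$ dominates a fixed positive multiple of the MH kernel $T$, and then push the uniform ergodicity of $T$ through this domination. Recall the standard characterisation: a Markov kernel is uniformly ergodic if and only if the whole state space is a small set, i.e.\ there exist an integer $m \geq 1$, a constant $\beta \in (0, 1]$ and a probability measure $\nu$ on $\Theta$ with $T^{m}(\theta, \cdot) \geq \beta\, \nu(\cdot)$ for every $\theta$; so it suffices to establish such a minorisation for some power of $T_{\sigma}$.

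The heart of the argument is the pointwise bound $\alpha_{\sigma}(\theta, \theta') \geq c\, \alpha(\theta, \theta')$ for all $\theta, \theta'$, with a constant $c = c(B) \in (0, 1]$ depending only on $B$. Writing $\hat{\lambda}_{\sigma}(\theta, \theta') = \lambda(\theta, \theta') + \sigma(\theta, \theta') Z$ with $Z \sim \mathcal{N}(0, 1)$ gives $\hat{r}_{\sigma}(\theta, \theta') = r(\theta, \theta')\, W$ with $W := \exp\!\big(\sigma(\theta, \theta') Z - \sigma^{2}(\theta, \theta')/2\big)$, and the elementary inequality $\min\{1, rw\} \geq \min\{1, r\}\min\{1, w\}$ for $r, w \geq 0$ (verified by cases according to whether $r \leq 1$ and $w \leq 1$) yields, after taking expectations,
\begin{equation*}
\alpha_{\sigma}(\theta, \theta') = \mathbb{E}\big[ \min\{1, r(\theta, \theta')\, W\} \big] \geq \alpha(\theta, \theta')\, \mathbb{E}\big[ \min\{1, W\} \big].
\end{equation*}
But $\mathbb{E}[\min\{1, W\}]$ is exactly $\alpha_{\sigma}(\theta, \theta')$ in the special case $\lambda(\theta, \theta') = 0$, so by Proposition \ref{prop: performance w.r.t. penalty term} it is a decreasing function of $\sigma^{2}(\theta, \theta')$; since $\sigma^{2}(\theta, \theta') \leq B$, it is at least $c(B) := \mathbb{E}[\min\{1, \exp(\sqrt{B}\, Z - B/2)\}]$, which is strictly positive (and at most $1$) because the integrand is strictly positive almost surely. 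One can also write down an explicit lower bound of the form $c(B) \geq \kappa\, e^{-B/2 - \sqrt{B}}$ with $\kappa$ a universal constant, which makes the dependence on $B$ transparent.

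With this in hand I would compare the full kernels. Setting $a(\theta) = \int \alpha(\theta, \theta')\, q(\theta' | \theta)\, d\theta'$ and $a_{\sigma}(\theta) = \int \alpha_{\sigma}(\theta, \theta')\, q(\theta' | \theta)\, d\theta'$, so that $T(\theta, d\theta') = \alpha(\theta, \theta')\, q(\theta' | \theta)\, d\theta' + (1 - a(\theta))\, \delta_{\theta}(d\theta')$ and likewise for $T_{\sigma}$, the continuous parts satisfy $\alpha_{\sigma}(\theta, \theta')\, q(\theta'|\theta) \geq c\, \alpha(\theta, \theta')\, q(\theta'|\theta)$, while the point masses satisfy $1 - a_{\sigma}(\theta) \geq 1 - a(\theta) \geq c\,(1 - a(\theta))$ — the first inequality because $\alpha_{\sigma} \leq \alpha$ (Jensen, as already noted), the second because $0 \leq c \leq 1$ and $1 - a(\theta) \geq 0$. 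Hence $T_{\sigma}(\theta, \cdot) \geq c\, T(\theta, \cdot)$ as measures, for every $\theta$. An easy induction (each step just integrating a larger nonnegative function against a positive measure) gives $T_{\sigma}^{m}(\theta, \cdot) \geq c^{m}\, T^{m}(\theta, \cdot) \geq c^{m}\beta\, \nu(\cdot)$ for all $\theta$, a uniform minorisation with the same $m$; therefore $T_{\sigma}$ is uniformly ergodic, and since $\pi$ is already known to be $T_{\sigma}$-invariant the chain converges to $\pi$ geometrically fast, uniformly in the starting point.

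The main obstacle I anticipate is the uniform lower bound $\mathbb{E}[\min\{1, W\}] \geq c(B) > 0$: one must be careful that the constant depends on the variances only through their common upper bound $B$, and the hypothesis $B < \infty$ enters precisely at this point. The argument also makes clear that boundedness of $\sigma^{2}$ cannot be dispensed with — if $\sigma^{2}(\theta, \theta')$ were allowed to grow without bound, the expected acceptance probabilities could be forced arbitrarily close to $0$ and no such domination of $T$ would hold.
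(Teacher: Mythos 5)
Your proof is correct, and it follows the same overall strategy as the paper --- establish a pointwise domination $\alpha_{\sigma}(\theta,\theta') \geq c\,\alpha(\theta,\theta')$ with $c$ depending only on $B$, lift it to the kernels, and transfer the minorisation of $T$ to $T_{\sigma}$ --- but the two steps are carried out differently. For the central inequality the paper conditions on the event $\{\hat{\lambda}_{\sigma} \geq \lambda\}$, on which $\hat{\alpha}_{\sigma} \geq \alpha$, and lower-bounds its probability by $\kappa = 1-\Phi\bigl(\sqrt{B}/2\bigr)$; you instead use the submultiplicativity $\min\{1,rw\} \geq \min\{1,r\}\min\{1,w\}$ to factor out $\mathbb{E}[\min\{1,W\}]$, identify this as the $\lambda = 0$ acceptance probability, and invoke the monotonicity of Proposition \ref{prop: performance w.r.t. penalty term} to bound it below by its value at $\sigma^{2}=B$, namely $c(B) = 2\bigl(1-\Phi(\sqrt{B}/2)\bigr)$. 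Your route yields the exact worst-case expected acceptance probability, hence a sharper constant (twice the paper's $\kappa$), at the mild cost of leaning on Proposition \ref{prop: performance w.r.t. penalty term}. For the second step the paper delegates the passage from $\alpha_{\sigma} \geq \kappa\alpha$ to $T_{\sigma}(\theta,E) \geq \kappa T(\theta,E)$ and to uniform ergodicity to the argument in \citet{Nicholls_et_al_2012}; you make it self-contained, correctly handling the rejection atom via $1-a_{\sigma}(\theta) \geq 1-a(\theta) \geq c(1-a(\theta))$ (using Jensen for $\alpha_{\sigma}\leq\alpha$ and $c\leq 1$), iterating to $T_{\sigma}^{m} \geq c^{m}T^{m} \geq c^{m}\beta\,\nu$, and invoking the small-set characterisation of uniform ergodicity together with the known $\pi$-invariance of $T_{\sigma}$. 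Both versions are sound; yours buys a better constant and a citation-free kernel comparison, while the paper's is shorter by outsourcing that bookkeeping.
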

\begin{proof}
We need to show that the expected acceptance probability of the penalty algorithm satisfies
\begin{equation} \label{eq: minorisation condition for the penalty algorithm}
\alpha_{\sigma}(\theta, \theta') \geq \kappa \alpha(\theta, \theta')
\end{equation}
for some $\kappa > 0$ for all $\theta, \theta'$. Then we can follow the same steps starting from Equation (6.3) of \citet[Appendix A]{Nicholls_et_al_2012} to conclude that for any $\theta \in \Theta$ and measurable set $E \subseteq \Theta$ we have $T_{\sigma}(\theta, E) \geq \kappa T(\theta, E)$ and that since by hypothesis $T$ satisfies a minorisation condition so does $T_{\sigma}$, which leads to uniform ergodicity. To show \eqref{eq: minorisation condition for the penalty algorithm}, let $\kappa = 1 - \Phi(B/2)$, where $\Phi$ is the cumulative distribution function of $\mathcal{N}(0, 1)$, and observe that, for any $\theta, \theta' \in \Theta$, and $V \sim \mathcal{N}(0, 1)$, we have
\begin{align*}
P\left( \hat{\lambda}_{\sigma}(\theta, \theta') \geq \lambda(\theta, \theta') \right) &= P \left( \sigma(\theta, \theta') V - \sigma^{2}(\theta, \theta')/2 \right) \\
& = P \left( V \geq \sigma(\theta, \theta')/2 \right) \\
&\geq P \left( V \geq B/2 \right) \\
& = 1 - \Phi(B/2) = \kappa.
\end{align*}
Therefore, by using a decomposition for the expected value, we have
\begin{align*}
\alpha_{\sigma}(\theta, \theta') &= \mathbb{E}\left[ \hat{\alpha}(\theta, \theta') \right] \\
& \geq \mathbb{E}\left[ \hat{\alpha}_{\sigma}(\theta, \theta') \left\vert \hat{\lambda}_{\sigma}(\theta, \theta') \geq \lambda(\theta, \theta') \right. \right] P\left( \hat{\lambda}_{\sigma}(\theta, \theta') \geq \lambda(\theta, \theta') \right) \\
& \geq \alpha(\theta, \theta') \kappa
\end{align*}
where the second inequality holds since $\hat{\alpha}_{\sigma}(\theta, \theta') \geq \alpha(\theta, \theta')$ in the event $\hat{\lambda}_{\sigma}(\theta, \theta') \geq \lambda(\theta, \theta')$.
\end{proof}

Assumption of uniform ergodicity for the MH algorithm can be restrictive. Thus, whether the penalty algorithm inherits the geometric ergodicity of the MH algorithm (with a slower rate) is worth investigating. The following lemma, which uses an intermediate result of Proposition \ref{prop: ergodicity of the penalty algorithm}, will be useful to establish geometric ergodicity of the penalty algorithm.
\begin{lem} \label{lem: bounded rejection probability for the penalty algorithm}
Suppose the MH algorithm targeting $\pi$ on $\Theta$ has the rejection probability function $\rho(\theta)$. Also, suppose the penalty algorithm for $\pi$ that uses the same proposal kernel and variance $\sigma^{2}: \Theta^{2} \rightarrow [0, \infty)$ has the rejection probability function $\rho_{\sigma}(x)$. If $\esssup \rho < 1$, the essential supremum being taken with respect to $\pi$, and $\sup_{\theta, \theta'} \sigma^{2}(\theta, \theta') = B < \infty$, then $\esssup \rho_{\sigma} < 1$.
\end{lem}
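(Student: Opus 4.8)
The plan is to leverage the minorisation inequality already established inside the proof of Proposition \ref{prop: ergodicity of the penalty algorithm}, namely that $\alpha_{\sigma}(\theta, \theta') \geq \kappa \, \alpha(\theta, \theta')$ for all $\theta, \theta' \in \Theta$ with $\kappa = 1 - \Phi(B/2)$, and simply to integrate it against the proposal kernel. First I would write the two rejection probabilities explicitly as $\rho(\theta) = 1 - \int_{\Theta} q(\theta' \mid \theta)\, \alpha(\theta, \theta') \, d\theta'$ and $\rho_{\sigma}(\theta) = 1 - \int_{\Theta} q(\theta' \mid \theta)\, \alpha_{\sigma}(\theta, \theta') \, d\theta'$, and note that $\kappa > 0$ strictly because $B < \infty$ forces $\Phi(B/2) < 1$.

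Next, integrating the pointwise bound $\alpha_{\sigma}(\theta, \theta') \geq \kappa\, \alpha(\theta, \theta')$ against $q(\cdot \mid \theta)$ yields $1 - \rho_{\sigma}(\theta) \geq \kappa\,(1 - \rho(\theta))$ for every $\theta \in \Theta$, equivalently $\rho_{\sigma}(\theta) \leq 1 - \kappa\,(1 - \rho(\theta))$.

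Finally, I would invoke the hypothesis $\esssup \rho < 1$: there is a constant $c < 1$ with $\rho(\theta) \leq c$ for $\pi$-almost every $\theta$, whence $\rho_{\sigma}(\theta) \leq 1 - \kappa\,(1 - c)$ for $\pi$-almost every $\theta$, and therefore $\esssup \rho_{\sigma} \leq 1 - \kappa\,(1-c) < 1$, as claimed.

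There is essentially no substantive obstacle in this argument; the only two points requiring a moment of care are that $\kappa$ is strictly positive (which is exactly where the assumption $B < \infty$ is used) and that the essential supremum is taken with respect to $\pi$, so the bound on $\rho$ holds only $\pi$-a.e.\ — but the same exceptional null set is inherited by $\rho_{\sigma}$, so the conclusion is unaffected.
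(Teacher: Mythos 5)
Your argument is correct and is essentially identical to the paper's own proof: both use the bound $\alpha_{\sigma}(\theta, \theta') \geq \kappa\, \alpha(\theta, \theta')$ with $\kappa = 1 - \Phi(B/2)$ established in Proposition \ref{prop: ergodicity of the penalty algorithm}, integrate it against $q(\cdot \mid \theta)$ to get $\rho_{\sigma}(\theta) \leq (1-\kappa) + \kappa\,\rho(\theta)$, and then pass to the essential supremum. No gaps.
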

\begin{proof}
We have shown in Proposition \ref{prop: ergodicity of the penalty algorithm} that if $\sup_{\theta, \theta'} \sigma^{2}(\theta, \theta') = B < \infty$ then $\alpha_{\sigma}(\theta, \theta') \geq \kappa \alpha(\theta, \theta')$ with $\kappa = 1 - \Phi(B/2)$.
\begin{align*}
\rho_{\sigma}(\theta) &= 1 - \int \alpha_{\sigma}(\theta, \theta') q(d\theta'|\theta) \\
& \leq 1 - \kappa \int \alpha(\theta, \theta') q(d\theta'|\theta) \\
& = 1 - \kappa (1 - \rho(\theta)) \\
& = (1 - \kappa) + \kappa \rho(\theta)
\end{align*}
Therefore, $\esssup \rho_{\sigma} \leq (1 - \kappa) + \kappa \esssup \rho < 1$.
\end{proof}
Define the sub-stochastic operator $T_{\sigma, a}$ the acceptance part of the kernel for the penalty algorithm, i.e.\ for any $\varphi \in L^{2}(\pi)$
\[
T_{\sigma, a} \varphi(\theta) = \int q(d\theta' | \theta) \alpha_{\sigma}(\theta, \theta') \varphi(\theta')
\]
We give a result on the geometric ergodicity of the penalty method that holds under the assumption that $T_{\sigma, a}$ is compact. (See \citet{Atchade_and_Perron_2007} for a discussion on the compactness assumption.)
\begin{prop} \label{prop: geometric ergodicity}
Assume $T_{\sigma, a}$ is compact. If the MH algorithm targeting $\pi$ is geometrically ergodic, then the penalty algorithm using a bounded $\sigma^{2}(\theta, \theta')$ is also geometrically ergodic.
\end{prop}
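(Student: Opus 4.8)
The plan is to work with the $L^{2}(\pi)$ spectrum. Since the penalty kernel $T_{\sigma}$ is reversible with respect to $\pi$ (as recalled above, following \citet{Nicholls_et_al_2012}), it acts as a bounded self-adjoint operator on $L^{2}(\pi)$ with $\|T_{\sigma}\|\le1$, and by the standard equivalence for reversible chains between geometric ergodicity and the existence of an $L^{2}(\pi)$ spectral gap (see, e.g., Roberts and Rosenthal, 1997) it suffices to show $\operatorname{spec}(T_{\sigma}|_{L^{2}_{0}(\pi)})\subseteq[-1+\delta_{1},\,1-\delta_{2}]$ for some $\delta_{1},\delta_{2}>0$, where $L^{2}_{0}(\pi)$ is the space of $\pi$-centred square-integrable functions. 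I would bound the top and the bottom of this spectrum separately: the top by a Peskun-type comparison with the MH chain, the bottom by using compactness of $T_{\sigma,a}$.

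For the top, recall the minorisation $\alpha_{\sigma}(\theta,\theta')\ge\kappa\,\alpha(\theta,\theta')$ with $\kappa=1-\Phi(B/2)>0$ established within the proof of Proposition \ref{prop: ergodicity of the penalty algorithm} (it uses only $\sup_{\theta,\theta'}\sigma^{2}(\theta,\theta')=B<\infty$). Writing the Dirichlet forms in symmetrised form and using reversibility of $T$ and $T_{\sigma}$,
\[
\langle(I-T_{\sigma})\varphi,\varphi\rangle_{\pi}=\tfrac12\int\!\!\int\pi(d\theta)\,q(d\theta'|\theta)\,\alpha_{\sigma}(\theta,\theta')\,(\varphi(\theta)-\varphi(\theta'))^{2}\ \ge\ \kappa\,\langle(I-T)\varphi,\varphi\rangle_{\pi}
\]
for all $\varphi\in L^{2}(\pi)$, so the right spectral gap obeys $\operatorname{Gap}(T_{\sigma})\ge\kappa\,\operatorname{Gap}(T)$. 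Since the MH chain is reversible and geometrically ergodic, $\operatorname{Gap}(T)>0$, whence $\sup\operatorname{spec}(T_{\sigma}|_{L^{2}_{0}(\pi)})=1-\operatorname{Gap}(T_{\sigma})<1$.

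For the bottom, decompose $T_{\sigma}=T_{\sigma,a}+M_{\rho_{\sigma}}$, where $M_{\rho_{\sigma}}$ is the nonnegative self-adjoint operator of multiplication by the rejection probability $\rho_{\sigma}$; then $\langle T_{\sigma}\varphi,\varphi\rangle_{\pi}\ge\langle T_{\sigma,a}\varphi,\varphi\rangle_{\pi}$, so it is enough to show $\inf\operatorname{spec}(T_{\sigma,a})>-1$. This is where compactness enters: a compact self-adjoint operator of norm at most $1$ has spectrum $\{0\}\cup\{\mu_{n}\}$ with $\mu_{n}\to0$, so its infimum is attained at a (most negative) eigenvalue and exceeds $-1$ unless $-1$ is itself an eigenvalue. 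I would rule the latter out through the identity
\[
\langle(I+T_{\sigma,a})\psi,\psi\rangle_{\pi}=\tfrac12\int\!\!\int\pi(d\theta)\,q(d\theta'|\theta)\,\alpha_{\sigma}(\theta,\theta')\,(\psi(\theta)+\psi(\theta'))^{2}+\int\rho_{\sigma}\,\psi^{2}\,d\pi\ \ge\ 0,
\]
whose right-hand side vanishes only if $\psi$ changes sign on every accepted move and $\psi=0$ wherever $\rho_{\sigma}>0$, i.e.\ only if $\{\psi>0\}$ and $\{\psi<0\}$ form a period-$2$ cyclic decomposition for the penalty chain, contradicting its aperiodicity. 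Together with the previous step this yields the $L^{2}_{0}(\pi)$ spectral gap, hence geometric ergodicity.

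The step I expect to be the main obstacle is the bottom of the spectrum: this is exactly where compactness of $T_{\sigma,a}$ cannot be removed, since otherwise the essential spectrum of $T_{\sigma,a}$ — equivalently, by Weyl's theorem, that of $T_{\sigma}$ up to the multiplication part — could reach $-1$ and geometric ergodicity would genuinely fail; the top, by contrast, is inherited from the MH chain through Proposition \ref{prop: ergodicity of the penalty algorithm}. An alternative and arguably tidier organisation runs the whole argument through the essential spectrum in the spirit of \citet{Atchade_and_Perron_2007}: compactness gives $\sigma_{\mathrm{ess}}(T_{\sigma})=\sigma_{\mathrm{ess}}(M_{\rho_{\sigma}})\subseteq[0,\esssup\rho_{\sigma}]$, then Lemma \ref{lem: bounded rejection probability for the penalty algorithm} together with the fact that geometric ergodicity of the MH chain already forces $\esssup\rho<1$ (obtained by testing the quadratic form of $T$ on indicators of small sets on which $\rho$ is close to its essential supremum) gives $\esssup\rho_{\sigma}<1$; the remaining spectrum of $T_{\sigma}$ then consists of isolated eigenvalues of finite multiplicity that cannot accumulate at $\pm1$ and, on $L^{2}_{0}(\pi)$, equal neither $1$ by irreducibility nor $-1$ by aperiodicity.
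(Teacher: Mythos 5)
Your argument is sound and reaches the conclusion by a route that is genuinely different from the paper's, at least in its primary form. The paper's proof is a reduction: geometric ergodicity of the MH chain gives $\esssup \rho < 1$ via Proposition 5.1 of \citet{Roberts_and_Tweedie_1996}, Lemma \ref{lem: bounded rejection probability for the penalty algorithm} transfers this to $\esssup \rho_{\sigma} < 1$, and then self-adjointness plus compactness of $T_{\sigma, a}$ let one verify the steps of Theorem 2.1 of \citet{Atchade_and_Perron_2007} --- which is exactly the Weyl essential-spectrum argument you sketch in your closing paragraph as the ``alternative organisation''. Your main argument instead splits the spectrum: the right edge is controlled by the Dirichlet-form (Peskun/comparison) inequality $\langle (I - T_{\sigma})\varphi, \varphi \rangle_{\pi} \geq \kappa \langle (I - T)\varphi, \varphi \rangle_{\pi}$ coming from the minorisation $\alpha_{\sigma} \geq \kappa \alpha$ of Proposition \ref{prop: ergodicity of the penalty algorithm}, combined with the Roberts--Rosenthal (1997) equivalence of geometric ergodicity and an $L^{2}(\pi)$ spectral gap for reversible chains; the left edge is controlled by compactness of $T_{\sigma,a}$ together with the identity for $\langle (I + T_{\sigma,a})\psi, \psi\rangle_{\pi}$ (which is correct, by detailed balance of $\pi(d\theta) q(d\theta'|\theta)\alpha_{\sigma}(\theta,\theta')$) to exclude $-1$ as an eigenvalue. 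This buys you two things the paper's route does not give: you bypass Lemma \ref{lem: bounded rejection probability for the penalty algorithm} and the Roberts--Tweedie input $\esssup\rho<1$ entirely, and you obtain the quantitative bound $\operatorname{Gap}(T_{\sigma}) \geq \kappa \operatorname{Gap}(T)$ with $\kappa = 1 - \Phi(B/2)$, i.e.\ an explicit rate comparison rather than a bare qualitative statement. What the paper's route buys is economy: a single cited theorem does all the spectral work, and only the hypotheses ($\esssup\rho_{\sigma}<1$, compactness, reversibility) need checking.

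One point deserves a word in both of your variants (and is equally glossed in the paper's ``verify the steps'' proof): $\pi$-irreducibility and aperiodicity of the penalty chain are used --- to invoke the gap-versus-geometric-ergodicity equivalence and to rule out the eigenvalue $-1$ --- but not established. Irreducibility is immediate from $\alpha_{\sigma} \geq \kappa\alpha > 0$ along MH-accessible moves; for aperiodicity note that $\alpha_{\sigma}(\theta,\theta') \leq \alpha(\theta,\theta')$ gives $\rho_{\sigma} \geq \rho$, and moreover $\alpha_{\sigma}(\theta,\theta') < 1$ whenever $\sigma^{2}(\theta,\theta') > 0$, so the chain has positive holding probability on a set of positive $\pi$-measure unless $\sigma^{2} \equiv 0$, in which case $T_{\sigma} = T$ and there is nothing to prove. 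With that sentence added, your argument is complete at (or above) the level of rigour of the paper's own proof.
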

\begin{proof}
The geometric ergodicity of the MH algorithm implies $\esssup \rho < 1$ by \citet[Proposition 5.1.]{Roberts_and_Tweedie_1996}. The boundedness of $\sigma^{2}(\theta, \theta')$ gives $\esssup \rho_{\sigma} < 1$ by Lemma \ref{lem: bounded rejection probability for the penalty algorithm}. Moreover, we observe that $T_{\sigma}$ is self-adjoint since the penalty algorithm is reversible. These facts can be used to establish the geometric ergodicity by verifying the steps of Theorem 2.1.\ of \citet{Atchade_and_Perron_2007} for the kernel of the penalty algorithm.
\end{proof}

\section{Differential privacy of the penalty algorithm} \label{sec: Differential privacy of the penalty algorithm}
We have mentioned above that the MH algorithm is superior to the penalty algorithm that mimics it with an extra noise term. Then, what is the point of using the penalty algorithm in the first place? There are at least two reasons, the second of which is relevant to data privacy.
\begin{enumerate}
\item $\lambda(\theta, \theta')$ may be intractable or too expensive to compute and instead an \emph{approximately} normal estimate of it may be used (in which case the resulting algorithm becomes an \emph{approximate} penalty algorithm in that the target density is no more $\pi$). This is in fact how the algorithm has been used in the literature so far.
\item Adding noise to $\lambda(\theta, \theta')$ may help with preserving some sort of \emph{data privacy} in a Bayesian framework where $\pi$, hence $\lambda(\theta, \theta')$, depends on the data to be conditioned on.
\end{enumerate}
The second argument is the main motivation of this paper and needs elaboration with a model that we will study for the rest of the paper. Suppose that we are interested in Bayesian inference of $\theta$ where $\pi$ is the posterior distribution given the data $y_{1:n}$ of size $n$ with prior distribution having a probability density function $\eta$,
\[
\pi(\theta) := p(\theta | y_{1:n}) \propto \eta(\theta) p(y_{1:n} | \theta),
\]
and the data samples conditional on $\theta$ are independent, which leads to the following simple factorisation of the data likelihood
\[
p(y_{1:n} | \theta) = \prod_{t = 1}^{n} p(y_{t} | \theta).
\]
Let us rewrite the logarithm of the acceptance ratio as $\lambda_{n}(y_{1:n}, \theta, \theta') = d_{n}(y_{1:n}, \theta, \theta')  + \log \frac{\eta(\theta') q(\theta | \theta')}{\eta(\theta) q(\theta' | \theta)}$ where
\[
d_{n}(y_{1:n}, \theta, \theta') := \sum_{t = 1}^{n} \left[ \log p(y_{t} | \theta')  - \log p(y_{t} | \theta) \right]
\]
where we include $n$ as well as $y_{1:n}$ in the notation to indicate dependency on data. Assuming calculation of $\log \frac{\eta(\theta') q(\theta | \theta')}{\eta(\theta) q(\theta' | \theta)}$ is straightforward, in order to implement the penalty algorithm, one only needs a normally distributed noisy version of $d_{n}(y_{1:n}, \theta, \theta')$ provided that the variance of the noise is known. Therefore, the data owner can feed the analyser with the noisy log-acceptance ratio and ensure a certain degree of data privacy as well as allowing analysis of the data.  Another relevant scenario is where the data are \emph{shared} among a certain number of users whose common interest is inferring $\theta$ and common concern is privacy of their data against each other. As we will discuss later, the penalty algorithm suggests that those users can execute a common algorithm by submitting their contributions to the log acceptance ratio in gaussian noise with known variance.

The important questions to be asked in the scenarios considered above are:
\begin{itemize}
\item How should we choose the noise variance $\sigma_{n}^{2}(\theta, \theta')$ to ensure a certain degree of data privacy?
\item How does the differential privacy of such schemes scale with data size $n$?
\end{itemize}
These questions will be addressed in the rest of this section, mainly by Theorem \ref{thm: Differential privacy of penalty algorithm}. In the following, we will investigate the degree to which differential privacy is ensured by enabling implementation of the penalty algorithm with reasonable convergence properties. We will first give our main result on the differential privacy of the penalty algorithm. We then provide an interpretation of the result and comment about the assumptions. Then we will provide a proof of the main result which relies on the concepts of \emph{advanced composition} \citep{Dwork_et_al_2010} and \emph{Gaussian mechanism} \citep{Dwork_and_Roth_2013}.

\subsection{Reviewing the penalty algorithm in the privacy context} \label{sec: Reviewing the penalty algorithm in the privacy context}
In order to study the differential privacy of the penalty algorithm, it is useful to view its iterations as a sequence of \emph{database access mechanisms} where these mechanisms are  (randomly) called upon outside by an \emph{adversary}. In particular, we need to introduce the concept of the \emph{Gaussian mechanism} \citep{Dwork_and_Roth_2013}. Define the Gaussian mechanism with $\sigma^{2} > 0$ for a function $f: \mathcal{X} \rightarrow \mathbb{R}^{d}$ with some $d \geq 1$ to be an algorithm $\mathcal{A}_{\sigma}: \mathcal{X} \rightarrow \mathbb{R}^{d}$ which outputs a noisy version of $f(X)$ for input $X$:
\[
\mathcal{A}_{\sigma}(X) = \hat{f}(X) \sim \mathcal{N}(f(X), \sigma^{2} I_{d}).
\]
For the penalty algorithm, for $\theta, \theta' \in \Theta$, $n \geq 1$ and a non-negative real number $\sigma > 0$, we define the database access mechanism $\mathcal{M}_{\theta, \theta', n, \sigma}$ to be the Gaussian mechanism with variance $\sigma^{2}$ for the function $d_{n}(\cdot, \theta, \theta'): \mathcal{Y}^{n} \rightarrow \mathbb{R}$. That is, given the data $y_{1:n}$, $\mathcal{M}_{\theta, \theta', n, \sigma}$ returns
\begin{equation} \label{eq: Gaussian mechanism for d}
\hat{d}_{n, \sigma}(y_{1:n}, \theta, \theta') \sim \mathcal{N}(d_{n}(y_{1:n}, \theta, \theta'), \sigma^{2}).
\end{equation}
Equation \eqref{eq: Gaussian mechanism for d} is nothing but the noisy version of the likelihood ratio in the penalty algorithm. 

Given the data $y_{1:n}$, initial value $\theta_{0} \in \Theta$ and variance $\sigma_{n}^{2}(\theta, \theta') = \sigma^{2} > 0$ for all $\theta, \theta' \in \Theta$, let the sequence of generated samples and proposed samples of the penalty algorithm be $\{ \theta_{n, t}, t \geq 1 \}$ and $\{ \theta'_{n, t}, t \geq 1 \}$ respectively. Given the current sample $\theta_{n, t-1} = \theta$ and the proposed sample $\theta'_{n, t} = \theta'$, the next iteration of the penalty algorithm can be viewed as in Algorithm \ref{alg: Iteration t of the penalty method with a database access view}:
\begin{algorithm} 
\caption{\textbf{Iteration $t$ of the penalty method with a database access view}}
\label{alg: Iteration t of the penalty method with a database access view}
\KwIn{$\theta_{n, t-1} = \theta$, $\theta'_{n, t} = \theta'$}
\KwOut{$\theta_{n, t}$, $\theta'_{n, t+1}$, $\hat{d}_{n, \sigma}(y_{1:n}, \theta, \theta')$}
The database access mechanism $\mathcal{M}_{\theta, \theta', n, \sigma}$ returns $\hat{d}_{n, \sigma}(y_{1:n}, \theta, \theta')$ as in \eqref{eq: Gaussian mechanism for d}. \\
The adversary decides on $\theta_{n, t}$ according to the acceptance probability 
\[
\min \left\{ 1, \frac{\eta(\theta') q_{n}(\theta | \theta')}{\eta(\theta) q_{n}(\theta' | \theta)} \exp\left( \hat{d}_{n, \sigma}(y_{1:n}, \theta, \theta') - \sigma^{2}/2 \right) \right\}.
\]
The adversary then samples $\theta'_{n, t+1}$ from $q_{n}(\cdot | \theta_{n, t})$. 
\end{algorithm}

The \emph{view of the adversary} using the penalty algorithm targeting $\pi_{n}$ for $k$ iterations is
\[
\{ \theta_{n, t}, \theta'_{n, t}, \hat{d}_{n, \sigma}(y_{1:n}, \theta_{n, t-1}, \theta'_{n, t}); 1 \leq t \leq k \}.
\]
That is, we consider the scenario in which not only the generated samples but also the estimates of the log acceptance ratios are observed during the run of the penalty algorithm. For simplicity of the analysis and emphasis on the contribution of the accept-reject procedure of the penalty algorithm, we will from here on assume that the proposal distribution $q_{n}$ does not depend on $y_{1:n}$ (more on this assumption later). Under that assumption, observe that only Step 1 of the above description requires access to data. Therefore, the differential privacy of the penalty algorithm lies in the differential privacy of a sequence the Gaussian mechanisms called during the iterations.


\subsection{The main result} \label{sec: The main result}
Denote the joint probability distribution of sampled and proposed variables by $P_{n, \sigma, \theta_{0}}$ ($y_{1:n}$ is omitted from the notation for simplicity) so that for every $k \geq 1$
\[
P_{n, \sigma, \theta_{0}}(d\theta_{n, 1:k} \times d \theta'_{n, 1:k}) = \prod_{t = 1}^{k} q_{n}(d\theta'_{n, t} | \theta_{n, t-1}) \left[ \alpha_{\sigma}(\theta_{n, t-1}, \theta'_{n, t}) \delta_{\theta'_{n, t}}(d\theta_{n, t}) +  (1 - \alpha_{\sigma}(\theta_{n, t-1}, \theta'_{n, t})) \delta_{\theta_{n, t-1}}(d\theta_{n, t}) \right]
\]
Our main theorem requires the following assumption on the structure of the log-likelihood together with the choice of the proposal density to hold $P_{n, \sigma, \theta_{0}}$-a.s.
\begin{asmp} \label{asmp: bounded difference in log-likelihood}
For all $\theta \in \Theta$, there exists an $\alpha > 0$ such that $\nabla_{2} d_{n}(\cdot, \theta, \theta') \leq c n^{-\alpha}$ $P_{n, \sigma, \theta_{0}}$-a.s., where the $\ell_{2}$ sensitivity $\nabla_{2} f$ of the function $f: \mathcal{X} \rightarrow \mathbb{R}^{d}$ for some $d \geq 1$ is defined as
\[
\nabla_{2} f = \sup_{x, y \in \mathcal{X} : h(x, y) \leq 1} \vert\vert f(x) - f(y) \vert\vert_{2}.
\]
\end{asmp}
Theorem \ref{thm: Differential privacy of penalty algorithm} states the number of iterations allowed to have a $(\epsilon, \delta)$ differential privacy with its $\epsilon$ and $\delta$ parameters having desirable orders of magnitude, which are $\mathcal{O}(1)$ and $o(1/n)$, respectively (see the discussion in \citet[Section 2.3]{Dwork_and_Roth_2013}).
\begin{thm} \label{thm: Differential privacy of penalty algorithm}
Assume \ref{asmp: bounded difference in log-likelihood} with some $c > 0$ and $\alpha > 0$. For every $\beta > 0$, $k_{0} > 0$ and $\theta_{0} \in \Theta$, there exists some $\sigma^{2} > 0$ such that it holds $P_{n, \sigma, \theta_{0}}$-a.s.\ that the $k(n) =  \lfloor k_{0} n^{2\alpha}/\log(n) \rfloor$ iterations of the penalty algorithm targeting $\pi_{n}$ and using $\sigma_{n}^{2}(\theta, \theta') = \sigma^{2}$ is $(\epsilon_{n, k(n)}, \delta_{n, k(n)})$-differentially private where 
\begin{align}
\epsilon_{n, k(n)} & \leq 2 \sqrt{k_{0} (2\alpha + \beta) \beta}  \frac{c}{\sigma} + 4 k_{0} \frac{c^{2}}{\sigma^{2}} (2\alpha + \beta) \label{eq: thm epsilon}\\
\delta_{n, k(n)} &\leq 1.25 k_{0} n^{-\beta}/\log(n) + n^{-\beta}. \label{eq: thm delta}
\end{align}
In particular, for $\beta > 1$, we have $\epsilon_{n, k(n)} = \mathcal{O}(1)$ and $\delta_{n, k(n)} = o(1/n)$.
\end{thm}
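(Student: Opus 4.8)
The plan is to view the $k(n)$-iteration run of the penalty algorithm as a $k(n)$-fold adaptive composition of Gaussian mechanisms followed by data-independent post-processing, and then to tune the per-round privacy parameters and the iteration budget so that the composed guarantee attains the advertised orders in $n$.

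First I would note that, under the standing assumption that $q_n$ does not depend on $y_{1:n}$, the only data-dependent operation in iteration $t$ of Algorithm \ref{alg: Iteration t of the penalty method with a database access view} is the call to the Gaussian mechanism $\mathcal{M}_{\theta_{n,t-1},\theta'_{n,t},n,\sigma}$ that returns $\hat d_{n,\sigma}(y_{1:n},\theta_{n,t-1},\theta'_{n,t})$; the accept/reject decision and the new proposal are functions of that output together with external randomness, hence post-processing. So the adversary's view over $k(n)$ iterations is a post-processing of a sequence of $k(n)$ Gaussian-mechanism outputs, generated adaptively since the query at step $t$ uses $\theta_{n,t-1}$. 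By Assumption \ref{asmp: bounded difference in log-likelihood}, $P_{n,\sigma,\theta_0}$-a.s.\ each query $d_n(\cdot,\theta_{n,t-1},\theta'_{n,t})$ has $\ell_2$-sensitivity at most $c n^{-\alpha}$, so the Gaussian-mechanism bound of \citet{Dwork_and_Roth_2013} makes each call $(\epsilon_0,\delta_0)$-DP with $\epsilon_0 = (c n^{-\alpha}/\sigma)\sqrt{2\log(1.25/\delta_0)}$ for any $\delta_0\in(0,1)$ (and $\epsilon_0<1$ once $n$ is large), and the advanced composition theorem of \citet{Dwork_et_al_2010}, together with post-processing invariance, then yields that the whole view is $(\epsilon_{n,k(n)},\delta_{n,k(n)})$-DP with
\[
\epsilon_{n,k(n)} \le \epsilon_0\sqrt{2k(n)\log(1/\delta_1)} + k(n)\,\epsilon_0(e^{\epsilon_0}-1), \qquad \delta_{n,k(n)} = k(n)\delta_0 + \delta_1,
\]
for any slack $\delta_1>0$.

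It then remains to choose $\delta_0$, $\delta_1$ and to compare with $k(n)=\lfloor k_0 n^{2\alpha}/\log n\rfloor$. I would take $\delta_1 = n^{-\beta}$ and $\delta_0 = 1.25\, n^{-(2\alpha+\beta)}$, so that $\log(1.25/\delta_0) = (2\alpha+\beta)\log n$. The $\delta$-bound is then immediate, $\delta_{n,k(n)} \le 1.25 k_0 n^{-\beta}/\log n + n^{-\beta}$, which is \eqref{eq: thm delta} and is $o(1/n)$ once $\beta>1$. For $\epsilon$, the key cancellation is that the $n^{2\alpha}$ factor in $k(n)$ exactly offsets the $(n^{-\alpha})^2$ inside $\epsilon_0^2$: using $e^x-1\le 2x$ for $x$ small, the second term collapses to the $n$-free quantity $4k_0(c^2/\sigma^2)(2\alpha+\beta)$ of \eqref{eq: thm epsilon}, and the first term collapses to a constant multiple of $\sqrt{k_0(2\alpha+\beta)\beta}\,(c/\sigma)$ after the leftover logarithmic factors are absorbed by the $\log n$ in the denominator of $k(n)$; choosing $\sigma$ sufficiently large then makes both terms as small as desired, and in particular $\mathcal{O}(1)$.

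The step I expect to be the main obstacle is precisely this logarithmic bookkeeping in the $\epsilon$-bound. The leading term $\epsilon_0\sqrt{2k(n)\log(1/\delta_1)}$ picks up a factor proportional to $\sqrt{\log(1.25/\delta_0)\,\log(1/\delta_1)/\log n}$, which the single power of $1/\log n$ in $k(n)$ does not obviously tame unless $\delta_0$ and $\delta_1$ are calibrated with care; at the same time the identity $\delta_{n,k(n)}=k(n)\delta_0+\delta_1$ forbids $\delta_0$ from decaying too slowly, since one needs $k(n)\delta_0=o(1/n)$. Reconciling these two competing requirements — getting the powers of $\log n$ to cancel rather than merely to be of the right polynomial order, which is exactly what forces the $n^{2\alpha}/\log n$ growth of $k(n)$ — is the delicate part, and it may require either a sharper composition argument than the vanilla advanced-composition bound or a slightly more conservative iteration budget. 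A minor auxiliary point is that Assumption \ref{asmp: bounded difference in log-likelihood} controls the sensitivity only along the algorithm's random trajectory, so the privacy statement is itself $P_{n,\sigma,\theta_0}$-almost sure and the composition bound is applied on that event.
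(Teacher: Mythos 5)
Your proposal retraces the paper's own proof essentially step for step: the $k(n)$-iteration run is viewed as a $k(n)$-fold adaptive composition of the Gaussian mechanisms $\mathcal{M}_{\theta,\theta',n,\sigma}$ followed by data-free post-processing, Theorem \ref{thm: Gaussian mechanism} is invoked per call with exactly the paper's calibration (your $\delta_0 = 1.25\,n^{-(2\alpha+\beta)}$ gives the per-call $\epsilon_0 = (c n^{-\alpha}/\sigma)\sqrt{2(2\alpha+\beta)\log n}$, which is the paper's choice with $\beta'=2\alpha+\beta$), and Theorem \ref{thm: Advanced composition} is applied with slack $\delta_1 = n^{-\beta}$. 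Your $\delta$ bound and the second term of the $\epsilon$ bound come out exactly as in \eqref{eq: thm delta} and \eqref{eq: thm epsilon}; the only refinement to note is that the requirement $\epsilon_0<1$ is ensured for \emph{all} $n$ (not just large $n$) by choosing $\sigma > \max_{n\ge 1} c n^{-\alpha}\sqrt{2(2\alpha+\beta)\log n}$, as the paper does.

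The obstacle you flag at the end, however, is genuine, and your earlier sentence claiming that the leading term ``collapses to a constant multiple of $\sqrt{k_0(2\alpha+\beta)\beta}\,(c/\sigma)$'' is not justified by this argument. Substituting $k(n)\le k_0 n^{2\alpha}/\log n$ into $\epsilon_0\sqrt{2k(n)\log(1/\delta_1)}$ gives $2(c/\sigma)\sqrt{k_0(2\alpha+\beta)\beta\,\log n}$: the single factor $1/\log n$ in $k(n)$ cancels only one of the two logarithms (one from $\log(1.25/\delta_0)$, one from $\log(1/\delta_1)$), so the leading term is $\mathcal{O}(\sqrt{\log n})$ rather than $\mathcal{O}(1)$. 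You have not missed an ingredient supplied by the paper: the paper's proof performs the identical substitution and silently drops this $\sqrt{\log n}$ factor in passing to \eqref{eq: thm epsilon}, so the gap you located is a gap in the stated bound itself as derived from vanilla advanced composition. Either of the repairs you suggest works. With the vanilla composition theorem one should take the more conservative budget $k(n) = \lfloor k_0 n^{2\alpha}/(\log n)^{2}\rfloor$, which makes both terms of $\epsilon_{n,k(n)}$ order one while keeping $\delta_{n,k(n)} = o(1/n)$ for $\beta>1$. Alternatively, one keeps $k(n) = \lfloor k_0 n^{2\alpha}/\log n\rfloor$ but replaces advanced composition by a bound tailored to Gaussian mechanisms (e.g.\ via concentrated or R\'enyi differential privacy, under which the $k$-fold composition with sensitivity $\Delta = c n^{-\alpha}$ is $\bigl(k\Delta^{2}/(2\sigma^{2}) + (\Delta/\sigma)\sqrt{2k\log(1/\delta)},\,\delta\bigr)$-differentially private); with $\delta = n^{-\beta}$ this yields $\epsilon_{n,k(n)} \to (c/\sigma)\sqrt{2k_0\beta}$, recovering the advertised orders.
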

\paragraph{Remarks:} Theorem \ref{thm: Differential privacy of penalty algorithm} is useful due to its explicit expressions for $k(n)$, $\epsilon_{n, k(n)}$, and $\delta_{n, k(n)}$. (It is possible to read \eqref{eq: thm epsilon} and \eqref{eq: thm delta} as equalities, since if an algorithm is $(\epsilon, \delta)$ differentially private, it is also $(\epsilon', \delta')$ differentially private for $\epsilon' > \epsilon$ and $\delta' > \delta$.) Specifically, it provides the order of how many iterations the penalty algorithm should be run and what $\sigma^{2}$ should be taken to achieve a given degree of differential privacy. In particular, we are interested in $\epsilon_{n, k(n)} = \mathcal{O}(1)$ and $\delta_{n, k(n)} = o(1/n)$ since those are the desired orders for the algorithm to be meaningfully private, see the discussion in \citet[Section 2.3]{Dwork_and_Roth_2013}. 

What the theorem suggests is that by tuning the free parameters $k_{0}$ and $\beta$ we can obtain a `scalable' differential privacy for a non-decreasing (in fact, increasing for large $n$) number of iterations of the penalty algorithm whose variance does not increase with $n$. Note that while $\alpha$ and $c$ depend on the model at hand, we have the freedom choose $k_{0}$, $\beta$. A realistic value for $\alpha$ is 0.5, as  we will discuss in Section \ref{sec: On the assumption on the proposal density} also.

The depicted scenario in which we observe the noisy log-likelihood ratio $\hat{d}_{n, \sigma}(y_{1:n}, \theta, \theta')$ will be more relevant when we discuss the use of the penalty algorithm in a data sharing scenario. In the absence of such a case, one can consider instead a scenario where the view of the adversary is only the posterior samples and the proposed samples $\theta_{n, t}, \theta'_{n, t}$, $t \geq 1$, or even only the posterior samples $\theta_{n, t}$, $t \geq 1$. In such a scenario, we can still use $(\epsilon_{n, k(n)}, \delta_{n, k(n)})$ to describe the differential privacy of the such a scheme. This is because once $\hat{d}_{n, \sigma}(y_{1:n}, \theta, \theta')$ is calculated, the accept-reject step and the proposal step can be seen as random \emph{post-processing steps} that do not require any access to data. This can only result in an algorithm whose differential privacy after $k(n)$ iterations is less than or equal to that $(\epsilon_{n, k(n)}, \delta_{n, k(n)})$ given in \eqref{eq: thm epsilon} and \eqref{eq: thm delta}.

\paragraph{Proof of the main result:} The following theorem by \citet{Dwork_and_Roth_2013} on the Gaussian mechanism plays a significant role in establishing the proof of Theorem \ref{thm: Differential privacy of penalty algorithm}.
\begin{thm} \label{thm: Gaussian mechanism} \textbf{(Gaussian mechanism)}
Let $\epsilon \in (0, 1)$ be arbitrary. The Gaussian mechanism for function $f$ with $\sigma > \nabla_{2}f \sqrt{2 \log(1.25/\delta)}/\epsilon$ is $(\epsilon, \delta)$-differentially private.
\end{thm}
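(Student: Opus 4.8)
The plan is to follow the classical privacy-loss argument. First I would reduce the claim to a tail bound on the \emph{privacy loss random variable}. Writing $p_X$ and $p_Y$ for the densities of $\mathcal{A}_{\sigma}(X) = \mathcal{N}(f(X), \sigma^2 I_d)$ and $\mathcal{A}_{\sigma}(Y) = \mathcal{N}(f(Y), \sigma^2 I_d)$ at neighbouring inputs $h(X,Y) \le 1$, define $L(z) = \log\{p_X(z)/p_Y(z)\}$. The elementary observation is that if $P(L(Z) > \epsilon) \le \delta$ when $Z \sim p_X$, then splitting any measurable $S$ over the set $B = \{z : L(z) > \epsilon\}$ gives $\int_S p_X = \int_{S \cap B} p_X + \int_{S \setminus B} p_X \le P(Z \in B) + e^{\epsilon} \int_S p_Y$, since on $S \setminus B$ one has $p_X \le e^{\epsilon} p_Y$ by definition of $B$. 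This is exactly the $(\epsilon, \delta)$ inequality of the differential privacy definition, so it suffices to control the upper tail of $L$ under $p_X$.

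Second, I would compute $L$ explicitly, exploiting the Gaussian form. Writing $Z = f(X) + \sigma G$ with $G \sim \mathcal{N}(0, I_d)$ and setting $v = f(X) - f(Y)$, a direct expansion of the two quadratic forms in the log-ratio gives
\[
L(Z) = \frac{\|v\|_2^2}{2\sigma^2} + \frac{\langle v, G \rangle}{\sigma},
\]
so that $L(Z)$ is univariate Gaussian with mean $\|v\|_2^2/(2\sigma^2)$ and variance $\|v\|_2^2/\sigma^2$; in particular the $d$-dimensional problem collapses to a one-dimensional one along the direction $v$. Because $\|v\|_2 \le \nabla_2 f$ by definition of the $\ell_2$ sensitivity, and since the resulting tail probability is monotone in $\|v\|_2$, it is enough to treat the worst case $\|v\|_2 = \nabla_2 f =: \Delta$.

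Third, I would convert the tail bound into the stated condition on $\sigma$. With $W \sim \mathcal{N}(0,1)$, the previous step yields
\[
P(L(Z) > \epsilon) = P\!\left( W > \frac{\sigma \epsilon}{\Delta} - \frac{\Delta}{2\sigma} \right),
\]
and I would apply the standard Gaussian tail inequality $P(W > t) \le (t\sqrt{2\pi})^{-1} e^{-t^2/2}$. Substituting the critical scaling $\sigma = c\,\Delta/\epsilon$ turns the threshold into $t = c - \epsilon/(2c)$, and the task reduces to showing that the choice $c = \sqrt{2\log(1.25/\delta)}$ makes $(t\sqrt{2\pi})^{-1} e^{-t^2/2} \le \delta$.

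The main obstacle is precisely this last algebraic step: extracting the exact constant $1.25$. Expanding $t^2 = c^2 - \epsilon + \epsilon^2/(4c^2)$ and taking logarithms, one must verify an inequality of the form $\log(t\sqrt{2\pi}) + t^2/2 \ge \log(1/\delta)$, where the cross term $-\epsilon + \epsilon^2/(4c^2)$ together with the factor $\log(t\sqrt{2\pi})$ is exactly what the constant $1.25$ (rather than $1$) must absorb. Here the hypothesis $\epsilon \in (0,1)$ is essential: it is used to bound the lower-order correction terms uniformly, so that the clean sufficient condition $\sigma > \nabla_2 f\,\sqrt{2\log(1.25/\delta)}/\epsilon$ emerges. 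I would carry this through following \citet{Dwork_and_Roth_2013}, checking that the inequality holds for all $\epsilon \in (0,1)$ and all admissible $\delta$.
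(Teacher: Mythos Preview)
The paper does not give its own proof of this theorem: it is quoted verbatim from \citet{Dwork_and_Roth_2013} and used only as a black-box tool in the proof of Theorem~\ref{thm: Differential privacy of penalty algorithm}. Your proposal correctly reconstructs the standard privacy-loss argument from that reference (reduce to a tail bound on $L(Z)=\log\{p_X(Z)/p_Y(Z)\}$, compute $L$ explicitly for Gaussians to get a one-dimensional normal, then apply a Gaussian tail inequality and absorb the lower-order terms using $\epsilon\in(0,1)$ to obtain the constant $1.25$), so it is essentially the same approach as the cited source.
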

The function $f$ relevant to the penalty algorithm is the $d_{n}(\cdot; \theta, \theta'): \mathcal{Y}^{n} \rightarrow \mathbb{R}$'s, i.e.\ the difference of log likelihood functions, when considered as functions of data samples. We need to cite one more theorem about differential privacy, regarding what happens when a sequence of database access mechanisms with a certain differential privacy is applied \citep{Dwork_et_al_2010}.

\begin{thm} \label{thm: Advanced composition} \textbf{(Advanced composition)}
For all $\epsilon, \delta, \delta' \geq 0$, the class of $(\epsilon, \delta)$-differentially private mechanisms satisfies $(\epsilon', k \delta + \delta')$-differential privacy under $k$-fold adaptive composition for:
\[
\epsilon' = \sqrt{2k \log (1/\delta')} \epsilon  + k \epsilon (e^{\epsilon} - 1).
\]
\end{thm}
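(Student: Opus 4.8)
The plan is to establish the bound through the \emph{privacy loss random variable} together with a martingale concentration argument, which is the standard route to this composition theorem. For neighbouring inputs $X, Y$ with $h(X,Y) \le 1$ and a mechanism $\mathcal{A}$, I would define the privacy loss at an output $o$ by $L(o) = \log \frac{P(\mathcal{A}(X) = o)}{P(\mathcal{A}(Y) = o)}$ (interpreted via densities). The key fact is a tail characterisation: $\mathcal{A}$ is $(\epsilon', \delta')$-differentially private for the pair $(X,Y)$ as soon as the privacy loss of $\mathcal{A}(X)$ exceeds $\epsilon'$ with probability at most $\delta'$, so it suffices to turn a high-probability bound on the accumulated loss into such a tail statement. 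As a preliminary reduction I would replace each $(\epsilon,\delta)$-private mechanism by a pure $(\epsilon,0)$-private one on a favourable event: a standard lemma shows that $(\epsilon,\delta)$-privacy implies that, off an event of probability at most $\delta$, the per-mechanism privacy loss is bounded in magnitude by $\epsilon$. A union bound over the $k$ mechanisms then isolates the additive $k\delta$ contribution to the final failure probability and lets me work, on the complementary event, with pointwise bounded losses.

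Next I would exploit additivity. Under $k$-fold adaptive composition the output is the tuple $(o_1, \ldots, o_k)$, where the $i$-th mechanism may depend on $o_1, \ldots, o_{i-1}$, and the privacy loss of the whole composition factorises as $L = \sum_{i=1}^k L_i$, with $L_i$ the loss of the $i$-th mechanism given the earlier outputs. The two facts I would extract per step are, first, $|L_i| \le \epsilon$ on the favourable event, and, second, the conditional mean bound $\mathbb{E}[L_i \mid o_1,\ldots,o_{i-1}] \le \epsilon(e^{\epsilon} - 1)$, which is the source of the $k\epsilon(e^\epsilon-1)$ term. The mean bound follows from the identity $\mathbb{E}_{\mathcal{A}(X)}[L_i] = \mathbb{E}_{\mathcal{A}(Y)}[L_i e^{L_i}]$ (since $e^{L_i}$ is the likelihood ratio and integrates to one under $\mathcal{A}(Y)$) combined with $|L_i| \le \epsilon$; the adaptivity is harmless here because each mechanism is $(\epsilon,\delta)$-private for \emph{every} fixed choice of the preceding outputs, so the per-step bounds hold conditionally.

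The concentration step is where the dominant $\sqrt{2k\log(1/\delta')}\,\epsilon$ term arises. I would centre the losses and apply the Hoeffding--Azuma inequality to the martingale with increments $L_i - \mathbb{E}[L_i \mid o_1,\ldots,o_{i-1}]$. Since each $L_i$ lies in an interval of length at most $2\epsilon$, the version of the inequality with exponent $-2t^2/\sum_i c_i^2$ and ranges $c_i = 2\epsilon$ gives
\[
P\left( \sum_{i=1}^{k} \bigl(L_i - \mathbb{E}[L_i \mid o_1,\ldots,o_{i-1}]\bigr) \ge t \right) \le \exp\left( -\frac{t^2}{2 k \epsilon^2} \right).
\]
Setting the right-hand side equal to $\delta'$ yields $t = \sqrt{2k\log(1/\delta')}\,\epsilon$. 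Adding the deterministic bound $\sum_i \mathbb{E}[L_i \mid \cdots] \le k\epsilon(e^\epsilon-1)$ shows that, on the favourable event, $L$ exceeds $\epsilon' = \sqrt{2k\log(1/\delta')}\,\epsilon + k\epsilon(e^\epsilon-1)$ with probability at most $\delta'$. Combining this with the $k\delta$ mass discarded in the reduction yields $P(L \ge \epsilon') \le k\delta + \delta'$, which is exactly the condition for $(\epsilon', k\delta+\delta')$-differential privacy of the composition.

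I expect the main obstacle to be careful bookkeeping rather than any single hard inequality. Two points need attention: getting the constants in the Azuma step exactly right, so that the interval length $2\epsilon$ together with the $-2t^2/\sum_i c_i^2$ form of the inequality reproduces $\sqrt{2k\log(1/\delta')}\,\epsilon$ and not a larger constant; and handling the reduction to pure privacy cleanly, so that the $k\delta$ term is accounted for without double counting against the $\delta'$ from the tail bound. The adaptive (rather than independent) nature of the composition also forces the martingale form of the concentration inequality and a genuinely conditional version of the expectation bound, both of which I would state and verify stepwise.
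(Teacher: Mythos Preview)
The paper does not contain a proof of this statement: Theorem~\ref{thm: Advanced composition} is quoted from \citet{Dwork_et_al_2010} and used as a black box in the proof of Theorem~\ref{thm: Differential privacy of penalty algorithm}. There is therefore nothing in the paper to compare your argument against.

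That said, your proposal is the standard route to this result as given in \citet{Dwork_et_al_2010} and \citet{Dwork_and_Roth_2013}: introduce the privacy loss random variable, reduce each $(\epsilon,\delta)$-private step to a pure $\epsilon$-private one off a bad event of mass $\delta$ (union bound gives the $k\delta$), bound the conditional mean of each loss increment by $\epsilon(e^{\epsilon}-1)$, and apply Azuma--Hoeffding to the centred martingale with increments in $[-\epsilon,\epsilon]$ to obtain the $\sqrt{2k\log(1/\delta')}\,\epsilon$ fluctuation term. Your constant bookkeeping in the Azuma step is correct. The only place to be slightly more careful is the reduction step: the clean statement is that any $(\epsilon,\delta)$-private mechanism can be coupled with an $(\epsilon,0)$-private mechanism so that they agree except with probability $\delta$ (this is sometimes called the ``simulation'' or ``truncation'' lemma), and one should argue that this coupling can be done conditionally at each step of the adaptive composition. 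Your sketch gestures at this but a full proof would need to spell it out.
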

The structure of Algorithm \ref{alg: Iteration t of the penalty method with a database access view} suggests that the penalty algorithm can be seen as a $k$-fold adaptive composition, where a database access mechanism with a certain differential privacy is called. This, combined with Theorems \ref{thm: Gaussian mechanism} and \ref{thm: Advanced composition} facilitates the proof of Theorem \ref{thm: Differential privacy of penalty algorithm}. 
\begin{proof}
\emph{(Theorem \ref{thm: Differential privacy of penalty algorithm})} \\
Let $\beta' = 2 \alpha + \beta$ and pick a finite $\sigma > 0$ that satisfies
\begin{equation} \label{eq: lower bound on sigma}
\sigma > \max_{n \geq 1} c n^{-\alpha} \sqrt{2 \beta' \log n}.
\end{equation}
which is possible since the right hand side is finite for each $\alpha > 0$. Next, let $\mathcal{D}_{c, \alpha, n, \sigma} = \{ \mathcal{M}_{\theta, \theta', n, \sigma}: \nabla_{2} d_{n}(\cdot, \theta, \theta') < c n^{-\alpha} \}$ be a family of the Gaussian mechanisms where $\mathcal{M}_{\theta, \theta', n, \sigma}$ was defined earlier. Then, Theorem \ref{thm: Gaussian mechanism} implies that each mechanism in $\mathcal{D}_{c, \alpha, n, \sigma}$ is $(\epsilon, \delta)$-DP where 
\[
\epsilon = \frac{c}{\sigma} n^{-\alpha} \sqrt{2 \beta' \log n}, \quad \delta = 1.25 n^{-\beta'}
\]
since $\sigma > c n^{-\alpha} \sqrt{2 \beta' \log n}$ by \eqref{eq: lower bound on sigma}, hence satisfying the condition of Theorem \ref{thm: Gaussian mechanism}. 

Furthermore, by Theorem \ref{thm: Advanced composition}, we see that for $\delta'_{n} > 0$ and $k \geq 1$, $\mathcal{D}_{c, \alpha, n, \sigma}$ satisfies $(\epsilon_{n, k}, \delta_{n, k})$-DP under $k$-fold adaptive composition
\[
\epsilon_{n, k} \leq \sqrt{4k \beta' \log n \log(1/\delta'_{n})} \frac{c}{\sigma} n^{-\alpha}  + 2 k \frac{c^{2}}{\sigma^{2}} n^{-2\alpha} 2 \beta' \log n, \quad \delta_{n, k} = 1.25 k n^{-\beta'} + \delta'_{n}.
\]
where we used $e^{\epsilon} -1 \leq 2 \epsilon$ since $\epsilon < 1$ (as noted by \citet{Wang_et_al_2015} also). Now, substituting $\delta'_{n} = n^{-\beta}$ and $k = k(n) = \lfloor k_{0} n^{2\alpha}/\log(n) \rfloor $, we have
\begin{align*}
\epsilon_{n, k(n)} &\leq 2 \sqrt{k_{0} \beta' \beta}  \frac{c}{\sigma} + 4 k_{0} \frac{c^{2}}{\sigma^{2}} \beta' \\
\delta_{n, k(n)} &\leq 1.25 k_{0} n^{-\beta}/\log(n) + n^{-\beta}.
\end{align*}
We conclude the proof by substituting $\beta' = 2 \alpha + \beta$ and pointing out that by Assumption \ref{asmp: bounded difference in log-likelihood} each iteration of the penalty algorithm targeting $\pi_{n}$ accesses data via a mechanism from $\mathcal{D}_{c, \alpha, n, \sigma}$ $P_{n, \sigma, \theta_{0}}$-a.s.\
\end{proof}

\subsubsection{On the assumptions about the proposal density} \label{sec: On the assumption on the proposal density}
Although Assumption \ref{asmp: bounded difference in log-likelihood} is not easy to verify, the following two assumptions are more testable and they together imply \ref{asmp: bounded difference in log-likelihood}.
\begin{asmp} \label{asmp: bounded derivative of log-likelihood}
The derivative with respect to $\theta$ of the log-likelihood $\ell(y, \theta)$ exists and for all $y \in \mathcal{Y}$ it is bounded by $M$ in absolute value
\[
\sup_{y \in \mathcal{Y}, \theta \in \Theta} \left\vert \frac{\partial \ell(y, \theta)}{\partial \theta(i)} \right\vert \leq M, \quad i = 1, \ldots, d_{\theta}.
\]
\end{asmp}
\begin{asmp}  \label{asmp: bounded difference in proposal}
There exists an $\alpha > 0$ and $c > 0$ such that  for all $n \geq 1$ we have $|\theta_{n, t}'(i) - \theta_{n, t-1}(i) | < c n^{-\alpha}$ $P_{n, \sigma, \theta_{0}}$-a.s.\ for all $i = 1, \ldots, d_{\theta}$.
\end{asmp}
Assumption \ref{asmp: bounded derivative of log-likelihood} can hold for compact sets for $\theta$ and $y$ and Assumption \ref{asmp: bounded difference in proposal} can be forced, for example, with a truncated random walk. Assumptions \ref{asmp: bounded derivative of log-likelihood} and \ref{asmp: bounded difference in proposal} imply Assumption \ref{asmp: bounded difference in log-likelihood}, since
\[
|\ell(y, \theta') - \ell(y, \theta)| \leq \sum_{i = 1}^{d_{\theta}} |\theta'(i) - \theta(i)| M
\]
by the zeroth order Taylor polynomial approximation and therefore $\nabla_{2} d_{n}(\cdot, \theta, \theta') \leq 2 d_{\theta} n^{-\alpha} M$. For Assumption \ref{asmp: bounded difference in proposal} to be compatible with desirable convergence properties for the algorithm, we would need $\alpha \leq 0.5$, since for any $\alpha > 0.5$ the proposal moves would have too short of a range and this would cause the penalty algorithm to collapse as $n \rightarrow \infty$.

A restrictive assumption we made about $q_{n}$ is its independence from the data $y_{1:n}$. This can be suboptimal when compared to schemes that exploiting the gradient of the posterior $\pi_{n}$ that can only be extracted by using the data. We made the independence assumption merely for emphasising on the differential privacy brought by the Gaussian mechanism that is naturally a part of the penalty algorithm. When $q_{n}$ depends on the data, sampling from $q_{n}$ should be seen as a database access mechanism and therefore the analysis of Algorithm \ref{alg: Iteration t of the penalty method with a database access view} should consider the differential privacy of the proposal step as well. Taking this into account, any proposal scheme whose differential privacy at one iteration is as good as the Gaussian mechanism is welcome, since such a scheme will not distort the order of $\epsilon_{n, k(n)}$ and $\delta_{n, k(n)}$ but will only double these quantities in the worst case. 

\subsubsection{Differential privacy of the penalty algorithm in a data sharing context} \label{sec: Differential privacy of the penalty algorithm in a data sharing context}
Consider a scenario where the available data $y_{1:n}$ is shared among $N$ \emph{data owners (sharers)} who are commonly interested in inferring the common parameter $\theta$ using all the data available. Let the portion of the data shared by sharer $i$ be $y^{(i)}$ of size $n_{i}$, with $n_{1} + \ldots + n_{N} = n$, so that $y_{1:n} = (y^{(1)}, \ldots, y^{(N)})$. Suppose these sharers want to implement a penalty algorithm together that targets $\pi_{n}$ as well as guarantees a certain differential privacy for each sharer against the other sharers. In other words, for each sharer all the other sharers are adversaries. This is possible if each sharer contributes to the log-acceptance with added noise. Specifically, given $\theta, \theta' \in \Theta$ as the current and proposed values in an iteration of the penalty algorithm, the contribution of user $i$ shall be
 \begin{equation} \label{eq: data sharer contribution}
\hat{d}_{n, \sigma}^{(i)}(y^{(i)}, \theta, \theta') \sim \mathcal{N}(d_{n_{i}}(y^{(i)} , \theta, \theta'), \sigma^{2}). 
\end{equation}
The steps of this algorithm can be written as in Algorithm \ref{alg: Iteration t of the penalty method within a data sharing context}.
\begin{algorithm}
\caption{\textbf{Iteration $t$ of the penalty method within a data sharing context}}
\label{alg: Iteration t of the penalty method within a data sharing context}
For $i = 1, \ldots, N$, user $i$ returns its contribution in noise, $\hat{d}_{n, \sigma}^{(i)}(y^{(i)}, \theta, \theta')$ as in \eqref{eq: data sharer contribution}.\\
$\theta_{n, t}$ is decided on according to the acceptance probability 
\[
 \min \left\{ 1,  \frac{\eta(\theta') q_{n}(\theta | \theta')}{\eta(\theta) q_{n}(\theta' | \theta)} \exp \left(\sum_{i = 1}^{N} \hat{d}_{n, \sigma}^{(i)}(y^{(i)}, \theta, \theta') - N \sigma^{2}/2 \right) \right\}.
\] 
Then the proposal $\theta'_{n, t+1}$ is sampled from $q_{n}(\cdot | \theta_{n, t})$ for the next iteration.
\end{algorithm}
The view of the adversaries (which are data sharers at the same time) using the penalty algorithm targeting $\pi_{n}$ for $k$ iterations is
\[
\{ \theta_{n, t}, \theta'_{n, t}, \hat{d}_{n, \sigma}^{(i)}(y^{(i)}, \theta_{n, t-1}, \theta'_{n, t}), 1 \leq i \leq N, 1 \leq t \leq k \}.
\]
Note that the $\ell_{2}$ sensitivity of the function that each data sharer reveals in noise is the same as in Section \ref{sec: The main result}. Therefore, with the same $\sigma^{2}$, the result for differential privacy in Theorem \ref{thm: Differential privacy of penalty algorithm} applies for the differential privacy of each user. However, there is a cost to pay: The total variance variance in the log-acceptance ratio is now $N \sigma^{2}$ instead of $\sigma^{2}$ which slows down the mixing of the chain of the penalty algorithm and makes it worse in Peskun sense. However, if $N$ does not increase with $n$, Algorithm \ref{alg: Iteration t of the penalty method within a data sharing context} is viable since then the variance of the total noise in the algorithm will increase not with $n$.

\subsection{Exponential families} \label{sec: Exponential families}
A broad and useful family of distributions for the likelihood are exponential family distributions where the likelihood of the data given $\theta$ is written as 
\[
p(y | \theta) = h(y) g(\theta) \exp[\varphi(\theta)^{T} S(y)]
\]
where $S(y)$ is a $d_{\varphi} \times 1$ vector of \emph{sufficient statistics} of $y$, $\varphi(\theta)$ is the $d_{\varphi} \times 1$ vector of natural parameters of the model, and $h(y)$ is the normalising constant. For the i.i.d.\ model with $n$ data samples, the logarithm of the ratio of log-likelihoods at $\theta$ and $\theta'$ becomes
\begin{equation} \label{eq: d for exponential families}
d_{n}(y_{1:n}, \theta, \theta') = n [\log g(\theta') - \log g(\theta)] + [\varphi(\theta') - \varphi(\theta)]^{T} S_{n}(y_{1:n})
\end{equation}
where $S_{n}(y_{1:n}) = \sum_{i = 1}^{n} S(y_{i})$.

Obviously, the previously discussed settings and stated results also hold for exponential family distributions; however, exponential families enable an \emph{alternative and simpler} way of sharing data in privacy: As one can see from \eqref{eq: d for exponential families}, a noisy version of $d_{n}(y_{1:n}, \theta, \theta')$ can be provided by providing a \emph{noisy version of the sufficient statistic} $S_{n}(y_{1:n})$. In order to implement the penalty algorithm for $k$ iterations, $k$ independent noisy versions of $S_{n}(y_{1:n})$ will be revealed by the data owner, i.e.\
\[
\hat{S}_{n, \xi}^{(t)}(y_{1:n}) \overset{\text{i.i.d.}}{\sim} \mathcal{N}(S_{n, \sigma}(y_{1:n}), \xi_{n} I_{d}), \quad t = 1, \ldots, k.
\]
The resulting log-ratio of likelihood is then
\[
\hat{d}_{n, \xi}(y_{1:n}, \theta, \theta') = n [\log g(\theta') - \log g(\theta)] + [\varphi(\theta') - \varphi(\theta)]^{T} \hat{S}_{n}(y_{1:n})
\]
It can be checked that variance of $\hat{d}_{n, \xi}(y_{1:n}, \theta, \theta')$ can be written as
\begin{equation} \label{eq: penalty variance for exponential families}
\sigma_{n}^{2}(\theta, \theta') = [\varphi(\theta') - \varphi(\theta)]^{T} [\varphi(\theta') - \varphi(\theta)] \xi_{n}.
\end{equation}
The resulting penalty algorithm for exponential families is given in Algorithm \ref{alg: Penalty method for exponential families}. Note that unlike the method in \citet{Foulds_et_al_2016} which adds noise to $S_{n}(y_{1:n})$ only once and then processes it, the penalty algorithm requires independent noisy sufficient statistics as many as the number of iterations it runs for. Another difference from the work of \citet{Foulds_et_al_2016} is that their method is only asymptotically unbiased in terms of the target distribution whereas the penalty method is always unbiased.
\begin{algorithm} 
\caption{\textbf{Iteration $t$ of the penalty method for exponential families}}
\label{alg: Penalty method for exponential families}
\KwIn{$\theta_{n, t-1} = \theta$, $\theta'_{n, t} = \theta'$}
\KwOut{$\theta_{n, t}$, $\theta'_{n, t+1}$, $\hat{S}_{n, \xi}(y_{1:n})$}
The database access mechanism returns and independent $\hat{S}_{n, \sigma}(y_{1:n}) \sim \mathcal{N}(S_{n, \xi}(y_{1:n}), \xi_{n} I_{d})$ as in \eqref{eq: Gaussian mechanism for d}. \\
The adversary calculates $\sigma_{n}^{2}(\theta, \theta') = [\varphi(\theta') - \varphi(\theta)]^{T} [\varphi(\theta') - \varphi(\theta)] \xi_{n}$ and decides on $\theta_{n, t}$ according to the acceptance probability 
\[
\min \left\{ 1,  \frac{\eta(\theta') q_{n}(\theta | \theta')}{\eta(\theta) q_{n}(\theta' | \theta)} \exp \left(\hat{d}_{n, \xi}(y_{1:n}, \theta, \theta') - \sigma_{n}^{2}(\theta, \theta')/2 \right) \right\}.
\]
The adversary then samples $\theta'_{n, t+1}$ from $q_{n}(\cdot | \theta_{n, t})$. 
\end{algorithm}

There are two computation-wise important observations for Algorithm \ref{alg: Penalty method for exponential families}. Firstly, the database access mechanism is much simpler than the general case since it returns independent noisy versions of the same quantity. Secondly, unlike for general models, the computation load does not grow with $n$ for exponential families.

Differential privacy of $k$ iterations of Algorithm \ref{alg: Penalty method for exponential families} can be computed using Theorems \ref{thm: Gaussian mechanism} and $\ref{thm: Advanced composition}$ once we know the $l_{2}$ sensitivity of $S$:
\[
\nabla_{2} S = \sup_{x, y \in \mathcal{Y}} || S(x) - S(y) ||_{2}
\]
On one hand, due to convergence issues discussed in Section \ref{sec: Statistical properties}, it is important to keep $\sigma_{n}^{2}(\theta, \theta)$ non-increasing with $n$. On the other hand, $\sigma_{n}^{2}(\theta, \theta')$ should be large enough to satisfy differential privacy for sufficiently many iterations. Fortunately, Corollary \ref{cor: thm for exponential families} shows that we can do these two things at the same time under Assumption \ref{asmp: bounded difference in proposal - exponential family} which is essentially the counterpart of Assumption \ref{asmp: bounded difference in log-likelihood}.
\begin{asmp}  \label{asmp: bounded difference in proposal - exponential family}
There exists an $\alpha > 0$ and $c > 0$ such that  for all $n \geq 1$ we have $|\varphi(\theta_{n, t}') - \varphi(\theta_{n, t-1}) | < c n^{-\alpha}$ $P_{n, \sigma, \theta_{0}}$-a.s.
\end{asmp}
Again, counterparts of Assumptions \ref{asmp: bounded derivative of log-likelihood} and \ref{asmp: bounded difference in proposal} on $\varphi$ and $q_{n}$ can be made to verify Assumption \ref{asmp: bounded difference in proposal - exponential family}, we do not give them here to avoid repetition.

\begin{cor} \label{cor: thm for exponential families}
Assume \ref{asmp: bounded difference in proposal - exponential family} with some $c > 0$ and $\alpha > 0$. For every $\beta > 0$, $k_{0} > 0$ and $\theta_{0} \in \Theta$, there exists some $\sigma^{2} > 0$ such that it holds $P_{n, \sigma, \theta_{0}}$-a.s.\ that the $k(n) = k_{0} n^{2\alpha}/\log(n)$ iterations of Algorithm \ref{alg: Penalty method for exponential families} targeting $\pi_{n}$ and using $\xi_{n} = \sigma^{2} (\nabla_{2} S)^{2} n^{2\alpha}/ c^{2}$ is $(\epsilon_{n, k(n)}, \delta_{n, k(n)})$-differentially private where $\epsilon_{n, k(n)}$ and $\delta_{n, k(n)}$ are given in \eqref{eq: thm epsilon} and \eqref{eq: thm delta}, respectively. Moreover, the variance of the noise used in the algorithm is bounded: for all $n \geq 1$, $\theta,\theta' \in \Theta$ $\sigma_{n}^{2}(\theta, \theta') \leq (\nabla_{2} S)^{2} \sigma^{2}$.
\end{cor}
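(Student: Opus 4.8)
The plan is to reduce the statement to the proof of Theorem~\ref{thm: Differential privacy of penalty algorithm} by identifying, for Algorithm~\ref{alg: Penalty method for exponential families}, the single Gaussian mechanism through which each iteration touches the data, and by checking that the prescribed choice of $\xi_n$ reproduces the per-iteration privacy parameters used there. First I would observe that in the exponential-family setting the only data-dependent object revealed at iteration $t$ is the noisy sufficient statistic $\hat S^{(t)}_{n,\xi}(y_{1:n})\sim\mathcal N(S_n(y_{1:n}),\xi_n I_{d_\varphi})$; the computation of $\sigma_n^2(\theta,\theta')$, the accept/reject decision, and the (data-free) proposal are all post-processing steps. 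Hence each iteration is precisely the Gaussian mechanism with variance $\xi_n$ for the function $S_n(\cdot)=\sum_{i=1}^n S(y_i):\mathcal Y^n\to\mathbb R^{d_\varphi}$. The crucial point is that the $\ell_2$ sensitivity of this function does not grow with $n$: for neighbouring $x,y\in\mathcal Y^n$ differing only in entry $j$ we have $S_n(x)-S_n(y)=S(x_j)-S(y_j)$, so $\nabla_2 S_n=\nabla_2 S$, and this is moreover independent of $\theta,\theta'$, so no analogue of Assumption~\ref{asmp: bounded difference in log-likelihood} is needed for the privacy accounting itself.

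Next I would pick a finite $\sigma>0$ exactly as in the proof of Theorem~\ref{thm: Differential privacy of penalty algorithm} (so that, with $\beta'=2\alpha+\beta$, the per-iteration $\epsilon$ below stays below $1$ for every $n$) and set $\xi_n=\sigma^2(\nabla_2 S)^2 n^{2\alpha}/c^2$, giving noise standard deviation $\sqrt{\xi_n}=\sigma\,\nabla_2 S\,n^{\alpha}/c$. Feeding $\nabla_2 S_n=\nabla_2 S$ and $\sqrt{\xi_n}$ into Theorem~\ref{thm: Gaussian mechanism} shows each iteration is $(\epsilon,\delta)$-differentially private with the very same $\epsilon=\tfrac{c}{\sigma}n^{-\alpha}\sqrt{2\beta'\log n}$ and $\delta=1.25\,n^{-\beta'}$ appearing in the proof of Theorem~\ref{thm: Differential privacy of penalty algorithm}, since the sensitivity-to-noise ratio $\nabla_2 S/\sqrt{\xi_n}$ equals $c n^{-\alpha}/\sigma$ (as in the proof of Theorem~\ref{thm: Differential privacy of penalty algorithm}, any residual strictness in the hypothesis of Theorem~\ref{thm: Gaussian mechanism} is absorbed by monotonicity of $(\epsilon,\delta)$-privacy, or by enlarging $\xi_n$ by an arbitrarily small amount). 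Because Algorithm~\ref{alg: Penalty method for exponential families} draws independent noise at each iteration and the $\theta,\theta'$ at iteration $t$ depend only on earlier outputs, the $k(n)=k_0 n^{2\alpha}/\log n$ iterations form a $k(n)$-fold adaptive composition of such mechanisms; applying Theorem~\ref{thm: Advanced composition} with $\delta'_n=n^{-\beta}$ and $e^{\epsilon}-1\le 2\epsilon$ then reproduces line-for-line the bounds \eqref{eq: thm epsilon}--\eqref{eq: thm delta}.

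Finally I would check the variance bound. The noise variance effectively injected into the log-acceptance ratio is $\sigma_n^2(\theta,\theta')=\|\varphi(\theta')-\varphi(\theta)\|_2^2\,\xi_n$; along the chain Assumption~\ref{asmp: bounded difference in proposal - exponential family} gives $\|\varphi(\theta')-\varphi(\theta)\|_2<c\,n^{-\alpha}$ $P_{n,\sigma,\theta_0}$-a.s., so $\sigma_n^2(\theta,\theta')< c^2 n^{-2\alpha}\cdot\sigma^2(\nabla_2 S)^2 n^{2\alpha}/c^2=(\nabla_2 S)^2\sigma^2$ uniformly in $n$; this is also what makes the statement an a.s.\ one, since the privacy accounting above is deterministic.

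I do not expect a genuine obstacle: the argument is essentially bookkeeping on top of the proof of Theorem~\ref{thm: Differential privacy of penalty algorithm}. The only point requiring care is the choice of the exponent $n^{2\alpha}$ in $\xi_n$, which must simultaneously (i) cancel the $n^{-\alpha}$ in the sensitivity term so that the per-iteration $(\epsilon,\delta)$ matches Theorem~\ref{thm: Differential privacy of penalty algorithm} and composition yields the identical final bounds, and (ii) cancel the $n^{-2\alpha}$ arising from Assumption~\ref{asmp: bounded difference in proposal - exponential family} so that $\sigma_n^2(\theta,\theta')$ stays bounded; verifying that one exponent does both jobs is the substance of the corollary.
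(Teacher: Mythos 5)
Your proposal is correct and follows essentially the same route as the paper: choose $\sigma$ as in the proof of Theorem~\ref{thm: Differential privacy of penalty algorithm}, note that each iteration accesses the data only through the Gaussian mechanism for $S_{n}$ with sensitivity $\nabla_{2}S$ and variance $\xi_{n}$, so the sensitivity-to-noise ratio $c n^{-\alpha}/\sigma$ reproduces the per-iteration $(\epsilon,\delta)$ of that proof, and then apply Theorems~\ref{thm: Gaussian mechanism} and~\ref{thm: Advanced composition} with the variance bound following from \eqref{eq: penalty variance for exponential families} and Assumption~\ref{asmp: bounded difference in proposal - exponential family}. Your write-up in fact makes explicit two points the paper leaves implicit (that the accept/reject and proposal steps are post-processing, and that Assumption~\ref{asmp: bounded difference in proposal - exponential family} is only needed for the variance bound, not the privacy accounting), but the argument is the same.
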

\begin{proof}
The proof is similar to that of Theorem \ref{thm: Differential privacy of penalty algorithm}. Let $\beta' = 2 \alpha + \beta$ and pick a finite $\sigma > 0$ such that $\max_{n \geq 1} c n^{-\alpha} \sqrt{2 \beta' \log n}$. The Gaussian mechanism that returns $\hat{S}_{n, \xi}$ with $\xi_{n} = \sigma^{2} (\nabla_{2} S)^{2} n^{2\alpha}/ c^{2}$ is $(\epsilon, \delta)$-DP where $\epsilon = \frac{c}{\sigma} n^{-\alpha} \sqrt{2 \beta' \log n}$ and $\delta = 1.25 n^{-\beta'}$ by Theorem \ref{thm: Gaussian mechanism}. The rest of the proof for differential privacy is the same as in the proof of Theorem \ref{thm: Differential privacy of penalty algorithm}. The bound for $\sigma_{n}^{2}(\theta, \theta')$ should be obvious from the choice made for $\xi_{n}$, equation \eqref{eq: penalty variance for exponential families} and Assumption \ref{asmp: bounded difference in proposal - exponential family}.
\end{proof}

\section{Conclusion} \label{sec: Conclusion}
The main contribution of this paper arises from the simple observation that the penalty algorithm has a built-in noise in its calculations which is not desirable in any other context but can be exploited for data privacy. We have shown that, without letting the variance of the noise increase and hence convergence properties of the algorithm degrade with increasing data size, the differential privacy of the algorithm can scale well with data size. The penalty algorithm has the advantage over the other Bayesian methods for differential privacy in the sense that it is always unbiased in terms of its target distribution.

Our analysis uses the fact that at each iteration a database access mechanism with the same differential privacy is called upon. This is necessary for employing Theorem \ref{thm: Advanced composition} on advanced composition. It would be worth studying what happens when mechanisms with different (but characterisable) differential privacies are called sequentially. This would help us relax our assumption on  the bounded sensitivity of the difference of the log-likelihood.

We have already stated that our algorithm preserves the target density at all iterations but computationally not scalable in data size (unless the posterior distribution has a convenient form such as exponential families).  A computationally scalable version of a privacy preserving penalty algorithm can be developed via subsampling, but this will introduce bias in the target density that may vanish asymptotically in data size under some conditions. This is in fact the topic of an unpublished work carried out by the author (in collaboration).

\section*{Acknowledgement} \label{sec: Acknowledgement}
Thanks to Canan Pehlivan and Taylan Cemgil for their careful reading of this paper.

\bibliographystyle{apalike}
{\footnotesize
\bibliography{myrefs_thesis}}

\begin{thebibliography}{}

\bibitem[Atchad\'{e} and Perron, 2007]{Atchade_and_Perron_2007}
Atchad\'{e}, Y.~F. and Perron, F. (2007).
\newblock On the geometric ergodicity of {M}etropolis-{H}astings algorithms.
\newblock {\em Statistics}, 41(1):77--84.

\bibitem[Ceperley and Dewing, 1999]{Ceperley_and_Dewing_1999}
Ceperley, D.~M. and Dewing, M. (1999).
\newblock The penalty method for random walks with uncertain energies.
\newblock {\em Journal of Chemical Physics}, 110(20):9812--9820.

\bibitem[Dwork et~al., 2006]{Dwork_et_al_2006}
Dwork, C., McSherry, F., Nissim, K., and Smith, A. (2006).
\newblock Calibrating noise to sensitivity in private data analysis.
\newblock In {\em Theory of Cyrptography}, pages 265--284. Springer.

\bibitem[Dwork and Roth, 2013]{Dwork_and_Roth_2013}
Dwork, C. and Roth, A. (2013).
\newblock The algorithmic foundations of differential privacy.
\newblock {\em Theoretical Computer Science}, 9(3-4):211--407.

\bibitem[Dwork et~al., 2010]{Dwork_et_al_2010}
Dwork, C., Rothblum, G.~N., and Vadhan, S. (2010).
\newblock Boosting and differential privacy.
\newblock In {\em Foundations of Computer Science (FOCS), 2010 51st Annual IEEE
  Symposium on}, pages 51--60.

\bibitem[Foulds et~al., 2016]{Foulds_et_al_2016}
Foulds, J., Geumlek, J., and an~Kamalika~Chaudhuri, M.~W. (2016).
\newblock On the theory and practice of privacy-preserving {B}ayesian data
  analysis.
\newblock Technical report, arxiv:1603.07294.

\bibitem[Hastings, 1970]{Hastings_1970}
Hastings, W.~K. (1970).
\newblock Monte {C}arlo sampling methods using {M}arkov chains and their
  applications.
\newblock {\em Biometrika}, 52(1):97--109.

\bibitem[Nicholls et~al., 2012]{Nicholls_et_al_2012}
Nicholls, G., Fox, C., and Watt, A. (2012).
\newblock Coupled {M}{C}{M}{C} with a randomised acceptance probability.
\newblock Technical report, arXiv:1205.6857.

\bibitem[Robert and Casella, 2004]{Robert_and_Casella_2004}
Robert, C.~P. and Casella, G. (2004).
\newblock {\em Monte Carlo Statistical Methods}.
\newblock New York: Springer, 2 edition.

\bibitem[Roberts and Tweedie, 1996]{Roberts_and_Tweedie_1996}
Roberts, G. and Tweedie, R. (1996).
\newblock Geometric convergence and central limit theorems for multidimensional
  {H}astings and {M}etropolis algorithms.
\newblock {\em Biometrika}, 83:95--110.

\bibitem[Wang et~al., 2015]{Wang_et_al_2015}
Wang, Y.-X., Fienberg, S., and Smola, A. (2015).
\newblock Privacy for free: {P}osterior sampling and stochastic gradient
  {M}onte {C}arlo.
\newblock In Blei, D. and Bach, F., editors, {\em Proceedings of the 32nd
  International Conference on Machine Learning (ICML-15)}, pages 2493--2502.
  JMLR Workshop and Conference Proceedings.

\bibitem[Welling and Teh, 2011]{Welling_and_Teh_2011}
Welling, M. and Teh, Y.~W. (2011).
\newblock Bayesian learning via stochastic gradient langevin dynamics.
\newblock In Getoor, L. and Scheffer, T., editors, {\em Proc. 28th
  International Conf. Machine Learning (ICML 2011)}, ICML '11, pages 681--688,
  New York, NY, USA. ACM.

\end{thebibliography}

\appendix
\section{Appendix} \label{sec: Appendix}
\begin{proof} (Proposition \ref{prop: performance w.r.t. penalty term})
Dropping $\theta, \theta'$ from the notation, we will show that the derivative of $\alpha_{\sigma}$ w.r.t.\ $\sigma$ is strictly negative.
\begin{align*}
\alpha_{\sigma} & = \int_{-\infty}^{\infty} \min \{ 1, \exp( r + \sigma x - \sigma^{2}/2 ) \} \phi(x)dx \\
& = \int_{\sigma/2 - r/\sigma}^{\infty} \phi(x)dx +  \int_{-\infty}^{\sigma/2 - r/\sigma} \exp(r + \sigma x - \sigma^{2}/2) \phi(x)dx \\
& = \int_{\sigma/2 - r/\sigma}^{\infty} \phi(x)dx + \exp(r) \int_{-\infty}^{-\sigma/2 - r/\sigma} \phi(x)dx
\end{align*}
where $\phi$ is the probability density of the standard normal distribution. Now,
\begin{align*}
\frac{d\alpha_{\sigma}}{d \sigma} &= -\phi(\sigma/2 - r/\sigma) (0.5 + r/\sigma^{2}) + \exp(r) \phi(-\sigma/2 - r/\sigma) (-0.5 + r/\sigma^{2}) \\
& = -\phi(\sigma/2 - r/\sigma) (0.5 + r/\sigma^{2}) + \phi(\sigma/2 - r/\sigma) (-0.5 + r/\sigma^{2}) \\
& = -\phi(\sigma/2 - r/\sigma)
\end{align*}
which is negative.
\end{proof}

\end{document}